\begin{document}

\title{Classical theories with entanglement}
\date{\today}
\author{Giacomo Mauro D'Ariano}
\email{dariano@unipv.it}
\author{Marco Erba}
\email{marco.erba@unipv.it}
\author{Paolo Perinotti}
\email{paolo.perinotti@unipv.it}
\affiliation{Universit\`a degli Studi di Pavia, Dipartimento di Fisica, QUIT Group, and INFN Gruppo IV, Sezione di Pavia, via Bassi 6, 27100 Pavia, Italy}
\begin{abstract}
We investigate operational probabilistic theories where the pure states of every system are the vertices of a simplex. A special case of such theories is that of \emph{classical theories}, i.e.~simplicial theories whose pure states are jointly perfectly discriminable. The usual Classical Theory satisfies also local discriminability. However, simplicial theories---including the classical ones---can violate local discriminability, thus admitting of entangled states. First, we prove sufficient conditions for the presence of entangled states in arbitrary probabilistic theories. Then, we prove that simplicial theories are necessarily causal, and this represents a no-go theorem for conceiving non-causal classical theories. We then provide necessary and sufficient conditions for simplicial theories to exhibit entanglement, and classify their system-composition rules. We conclude proving that, in simplicial theories, an operational formulation of the superposition principle cannot be satisfied, and that---under the hypothesis of $n$-local discriminability---no mixed state admits of a purification. Our results hold also in the general case where the sets of states fail to be convex.
\end{abstract}
\keywords{Entanglement, Classicality, Causality, Local discriminability, Superposition principle, Purification principle}
\pacs{}
\maketitle

\section{Introduction}
Entanglement is the quantum feature marking the starkest departure of Quantum Theory from Classical Theory (in the following, these will be referred to as QT and CT, respectively). The phenomenon is commonly popularised as the so-called \emph{quantum nonlocality}, although  the two concepts are not coincident. 
Indeed, it is known that the mere existence of entangled states is not sufficient for nonlocality
~\cite
{PhysRevA.40.4277,PhysRevA.65.042302}. On the other hand, CT does not allow for any kind of entanglement or nonlocality.

States in CT have a very simple geometrical structure, namely the set of states for every system is a \emph{simplex}. In the present work, we argue that the absence of entanglement in CT is due not only to the simplicial structure of the set of states of the single systems, but also to the composition rule of the systems, which satisfies local discriminability~\cite{bookDCP2017}. Indeed, as we will show, {\em simplicial theories} can exhibit entanglement,  including general {\em classical theories}, namely those simplicial theories where the pure  states are jointly perfectly discriminable~\cite{barrett2007information,PhysRevLett.99.240501,pfister2013information}.  Besides classical theories, the definition of simplicial theory encompasses
even more general cases---e.g.~noisy versions of classical theories, where pure states cannot be reliably distinguished due to a limited set of measurements \cite{janotta2014generalized}. The characteristic trait of a simplicial theory is that every state has a unique convex decomposition into pure states. The results of the present work hold for all simplicial theories, encompassing even cases where the sets of states are not convex. In contrast to the common understanding of the quantum/classical divide, as a consequence of our results we show that a classical theory generally admits of entangled states.

The paper is organised as follows. In Section~\ref{sec:OPTs} we provide a review of the framework of Operational Probabilistic Theories (OPTs). In Section~\ref{sec:entangl_OPTs} we point out the relevance of the notion of parallel composition in a probabilistic theory, focusing on the consequences regarding the presence of entanglement in arbitrary OPTs. In particular, we prove sufficient conditions for the presence of entangled states in a probabilistic theory. In Section~\ref{sec:entangl_simpl}, we specialise to simplicial theories, discussing various features of this class of OPTs. First, as a consequence of the geometric structure of the sets of states, we show that a simplicial theory is necessarily causal, highlighting that causality is intrinsic to classical theories---and, in particular, to CT. Then, we provide necessary and sufficient conditions for the presence of entangled states in a simplicial theory, along with a classification of simplicial theories in terms of their composition rules for systems. We conclude showing that, despite the possibility of exhibiting entanglement, no simplicial theory can admit of superposition (in an operational sense), or 
purifications of (any) mixed states (under the hypothesis of $n$-local discriminability). Finally, in Section~\ref{sec:conclusions} we discuss some information-theoretic features shared by simplicial theories with entanglement, making a comparison with the existing literature on probabilistic theories, and we draw our conclusions.

\section{Review of the framework of Operational Probabilistic Theories}\label{sec:OPTs}
The primitive notions of an operational theory are those of \emph{systems}, \emph{tests}, and \emph{events}. A system $\sys{S}$ represents the physical entity which is probed in the laboratory, such as a radiation field, a molecule, an elementary particle. A test $\T{E}$ is characterised by a collection of events $\lbrace \T{E}_x \rbrace_{x\in\rX}$, and represents the single occurrence of a physical process, such as the use of a physical device or a measuring apparatus. The \emph{outcome space} $\rX$ associated with a test $\T{E}$ collects all the possible outcomes that can occur within the test. Each test $\T{E}$ is characterised by an input system $\sys{A}$ and an output system $\sys{B}$. For instance, think of an electron--proton scattering: both the input and the output systems are an electron--proton pair, the test contains only one event corresponding to the two-particle interaction, and finally the outcome space is a singleton. On the other hand, a fundamental feature for a physical theory is the power to make predictions. Accordingly, the operational structure needs to be endowed with a suitable set of rules for computing the probability distributions associated to each experiment.

The above structure can be formalised defining the notion of an Operational Probabilistic Theory (OPT).\footnote{For a thorough presentation of the OPT framework, we refer the interested reader to Refs.~\cite{PhysRevA.81.062348,PhysRevA.84.012311,bookDCP2017,chiribella2016quantum}} In the present section, we recall some relevant concepts in the scope of OPTs. Given an OPT $\Theta$, $\Sys{\Theta}$ will denote the set of the systems of the theory. We will denote by Roman letters $\sys{A,B,\ldots}\in\Sys{\Theta}$
the systems, and by $\T{E}=\lbrace \T{E}_x \rbrace_{x\in\rX}$ the tests, where each $\T{E}_x$ for $x\in\rX$ is a possible event that can occur in the test. Given a test $\T{E}$, each possible event $\T{E}_x$ is a map
from an input system $\rA$ to an output system $\rB$, and corresponds to the outcome $x$. Whenever the output of a test $\T{E}_1=\lbrace \T{E}_x \rbrace_{x\in\rX}$ coincides with the input of another test $\T{E}_2=\lbrace \T{E}_y \rbrace_{y\in\rY}$, the \emph{sequential composition} $\T{E}_2\circ\T{E}_1\coloneqq\T{E}_2\T{E}_1\equiv\lbrace \T{E}_y\T{E}_x \rbrace_{(y,x)\in\rY\times\rX}$ of the two tests can be defined, being an allowed test for the theory. Thus, physical systems define the connection rules for tests.
In an OPT, two important classes of tests are those of \emph{preparations} and \emph{observations}, i.e.~tests with no input or no output system, respectively. Accordingly, preparations $\rho$ and observations $a$, where  $\sys{A}$ is, respectively, the output or the input system, are conveniently denoted using the Dirac-like notation $\rket{\rho}_{\sys{A}}$ and $\rbra{a}_{\sys{A}}$. Given some arbitrary preparation event $\rket{\rho_i}_{\sys{A}}$, event $\T{E}_x$ from system $\sys{A}$ to system $\sys{B}$, and observation event $\rbra{a_k}_{\sys{B}}$, the purpose of an OPT is to compute joint probabilities of the form:
\begin{align}\label{eq:probability_OPT}
p\left(i,x,k\left.|\right. \rho,\T{E},a \right) \coloneqq \rbra{a_k}_{\sys{B}}\T{E}_x\rket{\rho_i}_{\sys{A}}.
\end{align}
Two events from a system $\sys{X}$ to a system $\sys{Y}$ are equivalent if all their joint probabilities of the form~\eqref{eq:probability_OPT}, given the same set of other events appearing in Eq.~\eqref{eq:probability_OPT}, are equal. We will call equivalent classes of preparation and observation events \emph{states} and \emph{effects}, respectively. Besides, equivalent classes of arbitrary events from $\sys{A}$ to $\sys{B}$ will be called \emph{transformations}. Notice that states and effects will be considered special cases of transformations. This can be done simply by introducing the notion of \emph{trivial system} $\sys{I}$, defined as their, respectively, input and output system. For every $\sys{A},\sys{B}\in\Sys{\Theta}$, $\St{A}$, $\Transf{A}{B}$, and  $\Eff{B}$ will denote, respectively, the sets of states of $\sys{A}$, of transformations from $\sys{A}$ to $\sys{B}$, and of effects of $\sys{B}$. Clearly, every event $\T{E}_x\in\Transf{A}{B}$ is a map from $\St{A}$ to $\St{B}$, or, dually, from $\Eff{B}$ to $\Eff{A}$. According to the above definitions, effects are separating for states, i.e.~given two states $\rket{\rho_1}_{\sys{A}}\neq\rket{\rho_2}_{\sys{A}}$, there exists $\rbra{a}_{\sys{A}}\in\Eff{A}$ such that $\rbraket{a}{\rho_1}_\sys{A}\neq\rbraket{a}{\rho_2}_\sys{A}$. Similarly, an effect is the equivalence class of those observation events that give the same probabilities for every state, and thus states are separating for effects.

Within an operational perspective, an agent is allowed to perform a test---say with outcome space $X$---disregarding the single outcomes within a subset $Y\subseteq X$, and then merging events in $Y$ into a single event: this possibility is captured by the notion of \emph{coarse-graining}. According to probability theory, the probability of the coarse-grained event $Y$ amounts to the sum of the probabilities of all the outcomes in the subset $Y$. Then, for each test $\lbrace \T{E}_x \rbrace_{x\in X}\subseteq \Transf{A}{B}$ and every subset $Y\subseteq X$, the coarse-grained event is symbolically given by $\sum_{y\in Y}\T{E}_y$, where sequential composition distributes over sums. The converse procedure of a coarse-graining is called a \emph{refinement}. An event with trivial refinement, i.e.~which cannot be further refined modulo a rescaling by a probability, is called \emph{atomic}. Being the framework probabilistic, one may also want to consider convex combinations of states, transformations, and effects, corresponding to a randomization, i.e.~a statistical mixture of events.

In general, states, transformations, and effects can be thought of as embedded in convex spaces. However, in general the set of states of a system might not be convex. Then, one can conveniently remind two important notions from conic and convex analysis, i.e.~those of atomic and extremal points. Let $\St{A}$ be the set of states of a system $\sys A$. The null state $\rket{\varepsilon}_\sys{A}$, which occurs with null probability in every context, is always included in $\St{A}$. Given $\rket{x_1},\rket{x_2}\in\St{A}$ and $p\in(0,1)$,
a state $\rket{x}\in\St{A}$ is called \emph{atomic} if the condition $\rket{x}=\rket{x_1}+\rket{x_2}$ implies $\rket{x_1}\propto \rket{x_2}$, while it is called \emph{extremal} if the condition $\rket{x}=p\rket{x_1}+(1-p)\rket{x_2}$ implies $\rket{x_1}=\rket{x_2}$. Let now $\ConvH{A}$ denote the convex hull of $\St{A}$. In the case where $\St{A}$ coincides with $\ConvH{A}$ for every system $\sys{A}$, the theory is called \emph{convex}. CT and QT are both convex theories. It might happen that a state is not atomic in $\St{A}$, whereas it is in $\ConvH{A}$.
We will denote by $\Extst{A}$ the set of extremal points of $\ConvH{A}$. Notice that $\Extst{A}$ contains the null state $\rket{\varepsilon}_\sys{A}$.
The event associated to a singleton test---namely a test where the outcome space is a singleton, i.e.~having a single outcome---is called \emph{deterministic}. The interpretation of a deterministic event is that the physical process considered happens with certainty, i.e.~with probability 1. For instance, a state is deterministic if and only if it gives probability 1 on every deterministic effect, or, in other words, if and only if it is normalised. The deterministic extremal states are historically called \emph{pure states}. Finally, deterministic states which are not extremal are
those historically called \emph{mixed states}. More generally,
we will call \emph{mixed} those states which are neither extremal nor atomic.

%
%

It is often convenient to consider the real span of sets of states $\St{A}$, which is a linear space denoted by $\StR{A}\coloneqq \Span_{\mathbb{R}}\St{A}$. The null state $\rket\varepsilon_\sys A$ is represented in $\StR{A}$ by the null vector $0$. Every system $\sys{A}$ is then associated to a quantity $\D{A}\coloneqq \dim \StR{A}$, which is called \emph{the dimension of the system $\sys{A}$}. Since effects are separating for states and viceversa, one has that $\dim\StR{A}=\dim\EffR{A}$. As usual throughout the literature, we consider finite-dimensional OPTs, namely theories where $\D{A}<+\infty$ for all systems $\sys{A}\in\Sys{\Theta}$. The latter assumption means that we are considering systems whose states can be completely probed via the statistics of a finite number of experiments. In CT, for instance, $\D{A}$ is the number of perfectly distinguishable states of a system $\sys{A}$, while in QT one has $\D{A}=d_{\sys{A}}^2$, where $d_{\sys{A}}$ is the dimension of the Hilbert space associated to the system $\sys{A}$.

We conclude the present review considering a fundamental structure of operational theories: \emph{parallel composition}. Indeed, the last piece of information one needs to characterise an OPT is a recipe to form compounds out of systems and events available to local experimenters. We denote parallel composition by the symbol $\boxtimes$, writing, for two arbitrary tests $\T{E}_1=\lbrace \T{E}_x \rbrace_{x\in\rX}$ and $\T{E}_2=\lbrace \T{E}_y \rbrace_{y\in\rY}$, $\T{E}_1\boxtimes\T{E}_2\equiv\lbrace \T{E}_x\boxtimes\T{E}_y \rbrace_{(x,y)\in\rX\times\rY}$. The latter, from a probabilistic point of view, represent uncorrelated tests. The main property of parallel composition is the following:
\begin{align}\label{eq:parallel_commuting}
(\tA\boxtimes\tB)\circ(\tC\boxtimes\tD)=(\tA\circ\tC)\boxtimes(\tB\circ\tD).
\end{align}
Property~\eqref{eq:parallel_commuting} states that the two operations of sequential and parallel composition commute. Also parallel composition distributes over sums. In the case of systems, states, and effects, we will use the following notation: $\sys{AB}\coloneqq\sys{A}\boxtimes\sys{B}$, $\rket{\rho}_{\sys{A}}\rket{\sigma}_{\sys{B}}\coloneqq\rket{\rho}_{\sys{A}}\boxtimes\rket{\sigma}_{\sys{B}}$, and $\rbra{a}_{\sys{A}}\rbra{b}_{\sys{B}}\coloneqq \rbra{a}_{\sys{A}}\boxtimes\rbra{b}_{\sys{B}}$. One has then to specify, for all transformations $\T{E}_x\in\Transf{A}{B}$ and $\T{D}_y\in\Transf{C}{D}$, how the composite event $\T{E}_x\boxtimes\T{D}_y$ embeds into the total space of transformations $\Transf{AC}{BD}$. In QT, for instance, this operation is given by the standard tensor product $\otimes$. Notice that both the sequential and the parallel composition of two deterministic transformations are deterministic. However, in principle, the property of atomicity may be not preserved under both kinds of composition.
\section{Parallel composition and the presence of entangled states in arbitrary Operational Probabilistic Theories}\label{sec:entangl_OPTs}
The existence of parallel composition entails a prescription to assign a dimension $\D{AB}$ to a composite system $\sys{AB}$ as a function of the dimensions $\D{A}$ and $\D{B}$  of the local systems $\sys{A},\sys{B}$. $\St{AB}$ contains at least the parallel composition of the states of $\sys{A}$ and $\sys{B}$, which can be composed independently. By virtue of property~\eqref{eq:parallel_commuting}, for every OPT $\Theta$ one has the following inequality (see e.g.~Ref.~\cite{hardy2012limited}):
\begin{align}\label{eq:dimAB_geq}
\D{AB}\geq\D{A}\D{B}\quad \forall\sys{A},\sys{B}\in\Sys{\Theta}.
\end{align}
This leads us to introduce an ``excess dimension'' of the composite system $\rA\rB$ as follows:
\begin{align*}
	\Delt{AB}\coloneqq\D{AB}-\D{A}\D{B}\quad\forall \rA,\rB\in\Sys{\Theta}.
	\end{align*}
From Eq.~\eqref{eq:dimAB_geq}, one can see  that $\Delt{AB}$ is in general a non-negative integer for all systems $\sys{A},\sys{B}$. In both  CT and QT one has $\Delt{AB}=0$. We then see how this relates to the degree of holism required in the task of state discrimination. 
\begin{property}[$n$-local discriminability~\cite{hardy2012limited}]
	Let $n\leq m$. The effects obtained as a conic combination of the parallel compositions of effects $a_1,a_2,\ldots,a_l$, where $a_j$ is $k_j$-partite with $k_j\leq n$ for all values of $j$, are separating for $m$-partite states.
\end{property}
In the simplest case where $n=1$, the property is called \emph{local discriminability}. In both CT and QT, the parallel compositions of local effects are separating for multipartite states, namely both theories satisfy local discriminability. The next result, whose proof can be found in Ref.~\cite{bookDCP2017}, characterises the composition rules for the dimensions of local systems in theories satisfying local discriminability.
\begin{proposition}\label{prop:local_discriminability}
	Let $\Theta$ be an OPT. Then $\Theta$ satisfies local discriminability if and only if the following rule holds:
	\begin{align}\label{eq:dimAB_equal}
	\D{AB}=\D{A}\D{B}\quad\forall\sys{A},\sys{B}\in\Sys{\Theta}.
	\end{align}
\end{proposition}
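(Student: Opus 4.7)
The plan is to prove the proposition by computing the dimension of the real linear span of product effects inside $\EffR{AB}$ and comparing it with $\D{AB}=\dim\EffR{AB}$. First I would show that this span has dimension exactly $\D{A}\D{B}$. The upper bound follows because every product $a\boxtimes b$ is a real bilinear expression in $a\in\EffR{A}$ and $b\in\EffR{B}$, so once one fixes bases $\{a_i\}_{i=1}^{\D{A}}$ of $\EffR{A}$ and $\{b_j\}_{j=1}^{\D{B}}$ of $\EffR{B}$, the $\D{A}\D{B}$ vectors $\{a_i\boxtimes b_j\}$ span the entire set of product effects. The matching lower bound is obtained by exhibiting dual bases $\{\rho_i\}\subset\StR{A}$ and $\{\sigma_j\}\subset\StR{B}$ satisfying $\rbraket{a_i}{\rho_k}=\delta_{ik}$ and $\rbraket{b_j}{\sigma_l}=\delta_{jl}$, which exist because effects and states are mutually separating and $\dim\StR{A}=\dim\EffR{A}$. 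Property~\eqref{eq:parallel_commuting} then forces $\rbraket{a_i\boxtimes b_j}{\rho_k\boxtimes\sigma_l}=\delta_{ik}\delta_{jl}$, so the $\D{A}\D{B}$ product effects are linearly independent in $\EffR{AB}$.

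For the forward direction, local discriminability says that conic---and therefore real linear---combinations of product effects separate bipartite states; a separating family of effects must span $\EffR{AB}$, so $\D{AB}\leq\D{A}\D{B}$, and combined with the bound~\eqref{eq:dimAB_geq} this yields~\eqref{eq:dimAB_equal}. For the converse, if $\D{AB}=\D{A}\D{B}$ then the $\D{A}\D{B}$-dimensional span of product effects already coincides with $\EffR{AB}$, and since effects separate states the real linear---hence also the conic---combinations of product effects separate the states of $\sys{AB}$.

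The one subtlety I foresee is the passage between separation by conic combinations (as stipulated by the property) and separation by real linear combinations (used in the dimension count). This is handled by writing any real combination as a difference of two conic ones and noting that $\rbraket{e}{\rho_1-\rho_2}=0$ for all conic $e$ is equivalent to its vanishing on the full real span; so conic combinations separate states if and only if their real span does. Apart from this, the argument is a clean dimension count exploiting the duality between states and effects and the bilinearity of $\boxtimes$.
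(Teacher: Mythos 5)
The paper does not actually prove Proposition~\ref{prop:local_discriminability}; it defers to Ref.~\cite{bookDCP2017}, so there is no in-text argument to compare against. Your dimension count---bilinearity of $\boxtimes$ giving the upper bound $\dim\Span(\Eff{A}\boxtimes\Eff{B})\le \D{A}\D{B}$, dual bases plus Eq.~\eqref{eq:parallel_commuting} giving the matching lower bound, and the equivalence ``separating $\Leftrightarrow$ spanning'' closing both directions---is precisely the standard proof from that reference, and your handling of the conic-versus-linear issue is correct. The only step you state as obvious that deserves a flag is ``a separating family of effects must span $\EffR{AB}$'': for a general (possibly non-convex, discrete) state set, separating \emph{pairs of states} does not by itself force the annihilator of the span to vanish; one needs the states to generate the cone whose span is $\StR{AB}$ (closure under randomization/coarse-graining), which is the same implicit assumption the paper itself makes when it infers $\dim\StR{A}=\dim\EffR{A}$ from the separation property. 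With that framework convention granted, your proof is complete.
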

As a consequence of the previous result, if a theory $\Theta$ does not satisfy local discriminability, then there exist two systems $\sys{A},\sys{B}\in\Sys{\Theta}$ such that $\Delt{AB} >0$. Another important feature shared by both CT and QT is the following.
\begin{property}[atomicity of state-composition\footnote{One may argue that the property of  atomicity of state-composition should be renamed {\em purity of state-composition}. However, as opposed to the notion of {\em atomicity} (from conic analysis) and {\em extremality} (from convex analysis), the notion of {\em purity} is a historical one introduced only for extremal deterministic states. In theories where every state is proportional to a deterministic one, there is no need for a distinction between atomicity and purity. With a slight abuse of terminology, in several situations, purity is used as a synonym of atomicity.}]
The parallel composition of two atomic states is atomic.
\end{property}

In this work, we relate the properties of local discriminability and atomicity of state-composition to the presence of entangled states. We now recall the definition of separable and entangled states. Given two systems $\sys{A},\sys{B}$, the \emph{separable states} of the bipartite system $\sys{AB}$ are those of the form:
\begin{align}\label{eq:separable_states}
&\rket{\sigma}_{\sys{AB}} = \sum_{i\in I} \rket{\alpha_i}_{\sys{A}}\rket{\beta_i}_{\sys{B}},
\end{align}
with $\St{A}\ni\rket{\alpha_i}_\sys A\neq\rket\varepsilon_\sys A$, $\St{B}\ni\rket{\beta_i}_\sys B\neq\rket\varepsilon_\sys B$ for every $i\in I$.
This set of states contains all states that can be prepared using only Local Operations and Classical Communication (LOCC)---but is generally larger than the set of LOCC states. In CT and QT the converse is also true. By negation, \emph{entangled states} are those states that are \emph{non-separable}.
The two following results hold for arbitrary OPTs.
\begin{proposition}\label{prop:loc_discr}
	Let $\Theta$ be an OPT. If $\Theta$ does not satisfy local discriminability, then it admits of entangled states.
\end{proposition}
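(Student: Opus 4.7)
The plan is to argue via dimension counting using the contrapositive. By Proposition~\ref{prop:local_discriminability}, the hypothesis that $\Theta$ does not satisfy local discriminability means that there exist systems $\sys{A},\sys{B}\in\Sys{\Theta}$ with $\D{AB}>\D{A}\D{B}$, i.e.~$\Delt{AB}>0$. I will then show that the real span of the separable states of $\sys{AB}$ has real dimension at most $\D{A}\D{B}$, while $\StR{AB}$ has dimension $\D{AB}>\D{A}\D{B}$; this gap forces the existence of at least one state of $\sys{AB}$ outside the span of separable states, and hence entangled by definition.

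For the dimension bound on separable states, I would fix bases $\{\rket{\alpha_i}_\sys{A}\}_{i=1}^{\D{A}}$ and $\{\rket{\beta_j}_\sys{B}\}_{j=1}^{\D{B}}$ of $\StR{A}$ and $\StR{B}$, respectively. Any separable state has the form $\rket{\sigma}_\sys{AB}=\sum_{i\in I}\rket{\mu_i}_\sys{A}\boxtimes\rket{\nu_i}_\sys{B}$, as in Eq.~\eqref{eq:separable_states}. Expanding each $\rket{\mu_i}_\sys{A}$ and $\rket{\nu_i}_\sys{B}$ in the chosen bases and invoking the bilinearity of $\boxtimes$ (i.e.~the fact that parallel composition distributes over sums), one obtains
\begin{align*}
\rket{\sigma}_\sys{AB}\in\Span_\mathbb{R}\bigl\{\rket{\alpha_i}_\sys{A}\boxtimes\rket{\beta_j}_\sys{B}:\,1\leq i\leq\D{A},\,1\leq j\leq\D{B}\bigr\},
\end{align*}
so the real span of all separable states is contained in a subspace of $\StR{AB}$ of dimension at most $\D{A}\D{B}$.

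The conclusion then follows by a simple dimension comparison: if every state of $\sys{AB}$ were separable, the real span of $\St{AB}$, which is $\StR{AB}$, would be contained in a subspace of dimension $\leq \D{A}\D{B}<\D{AB}$, a contradiction. Therefore, $\St{AB}$ must contain states not expressible in the separable form~\eqref{eq:separable_states}, i.e.~entangled states.

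The argument is essentially a linear-algebraic observation, so the main subtlety is not conceptual but notational: one must be careful that the bilinearity of $\boxtimes$ allows the expansion step even though the $\rket{\mu_i}_\sys{A}$ and $\rket{\nu_i}_\sys{B}$ appearing in the definition~\eqref{eq:separable_states} are required to be bona fide (nonzero) states rather than arbitrary elements of $\StR{A}$ and $\StR{B}$. This is handled by observing that the expansion in the bases produces a real linear combination, which lies in the real span of products regardless of whether the individual summands are states, which is all that the span bound requires. No convexity of $\St{A}$, $\St{B}$, or $\St{AB}$ is needed, so the argument applies also to non-convex OPTs.
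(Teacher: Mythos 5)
Your proof is correct and follows essentially the same route as the paper's: both use Proposition~\ref{prop:local_discriminability} together with Eq.~\eqref{eq:dimAB_geq} to get $\D{AB}>\D{A}\D{B}$, observe that separable states lie in the span of product states, which has dimension at most $\D{A}\D{B}$, and conclude by dimension comparison. Your version merely spells out the bilinearity/expansion step that the paper leaves implicit.
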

\begin{proof}
By hypothesis, from Eqs.~\eqref{eq:dimAB_geq} and~\eqref{eq:dimAB_equal} we have that there exist $\sys{A},\sys{B}\in\Sys{\Theta}$ such that $\D{AB}>\D{A}\D{B}$. Since product states generate a subspace of $\StR{AB}$ of dimension $\D{A}\D{B}$, containing all separable states, $\St{AB}$ must contain at least one state that is not separable. Then $\Theta$ admits of entangled states.
\end{proof}
\begin{proposition}\label{prop:atom_comp}
	Let $\Theta$ be an OPT. If $\Theta$ does not satisfy atomicity of state-composition, then it admits of entangled states.
\end{proposition}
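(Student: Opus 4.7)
The plan is to derive a contradiction from supposing that the product state witnessing the failure of atomicity is separable. By the hypothesis, there exist atomic states $\rket{\alpha}_\sys A\in\St{A}$ and $\rket{\beta}_\sys B\in\St{B}$ such that $\rket{\alpha}_\sys A\rket{\beta}_\sys B$ is not atomic; by the definition of atomicity, this yields $\rket{\tau_1},\rket{\tau_2}\in\St{AB}$ with $\rket{\tau_1}\not\propto\rket{\tau_2}$ and $\rket{\alpha}_\sys A\rket{\beta}_\sys B=\rket{\tau_1}+\rket{\tau_2}$. I will aim to show that at least one of the $\rket{\tau_i}$ must be entangled.

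First, I would assume for contradiction that both $\rket{\tau_1}$ and $\rket{\tau_2}$ admit separable decompositions as in~\eqref{eq:separable_states}, and merge the two sums into a single one, $\rket{\alpha}_\sys A\rket{\beta}_\sys B=\sum_{k\in K}\rket{\alpha_k}_\sys A\rket{\beta_k}_\sys B$, with every local factor non-null. Applying an arbitrary effect $\rbra{b}_\sys B$ to the second system, and using property~\eqref{eq:parallel_commuting} together with the distributivity of $\boxtimes$ over sums, one obtains $\rbraket{b}{\beta}\rket{\alpha}_\sys A=\sum_{k\in K}\rbraket{b}{\beta_k}\rket{\alpha_k}_\sys A$, namely a positive-coefficient decomposition of a nonnegative multiple of $\rket{\alpha}$ into states of $\sys A$.

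The key step will be to exploit the atomicity of $\rket{\alpha}$: although the definition in the paper is stated only for two summands, a routine grouping argument of the form $\rket{\alpha}=\rket{x_k}+\sum_{j\ne k}\rket{x_j}$ extends it to arbitrary finite sums with non-null terms, forcing every summand to be collinear with $\rket{\alpha}$. Hence for every $k$ such that the chosen effect $\rbra{b}$ does not annihilate $\rket{\beta_k}$, one obtains $\rket{\alpha_k}\propto\rket{\alpha}$. Since each $\rket{\beta_k}$ is non-null and effects separate states, one can vary $\rbra{b}$ (taking, if needed, a suitable positive combination of effects) to cover every index $k$; by the symmetric argument on $\sys A$ using the atomicity of $\rket{\beta}$, one likewise obtains $\rket{\beta_k}\propto\rket{\beta}$ for every $k$.

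Putting these conclusions together, each term $\rket{\alpha_k}_\sys A\rket{\beta_k}_\sys B$ is a nonnegative multiple of $\rket{\alpha}_\sys A\rket{\beta}_\sys B$, so that both $\rket{\tau_1}$ and $\rket{\tau_2}$ turn out to be scalar multiples of $\rket{\alpha}_\sys A\rket{\beta}_\sys B$ and thus proportional to each other, contradicting the choice of the non-atomicity witness. The main obstacle I anticipate is precisely the promotion of two-summand atomicity to arbitrary finite positive-coefficient refinements, together with the ``choice of effect'' bookkeeping needed to conclude the proportionality for \emph{every} index $k$, since a single local effect may vanish on some of the $\rket{\beta_k}$ (respectively $\rket{\alpha_k}$).
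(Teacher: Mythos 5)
Your proposal is correct and follows essentially the same route as the paper's proof: assume the non-atomic product state is separable, hit the separable decomposition with a local effect chosen (index by index) not to annihilate the corresponding $\rket{\beta_k}_\sys{B}$, and invoke the atomicity of $\rket{\alpha}_\sys{A}$ to force every $\rket{\alpha_k}_\sys{A}\propto\rket{\alpha}_\sys{A}$, contradicting the non-proportionality of the terms. The one point the paper treats more carefully is that $\rbraket{b}{\beta}\rket{\alpha}_\sys{A}=\sum_k\rbraket{b}{\beta_k}\rket{\alpha_k}_\sys{A}$ is a decomposition of a \emph{sub-normalised multiple} of $\rket{\alpha}_\sys{A}$, to which the stated definition of atomicity does not literally apply; the paper fixes this by taking $\rbra{b}$ deterministic, completing $\rket{\beta}_\sys{B}$ to a preparation test $\{\rket{\beta},\rket{\tilde\beta}\}$, and adding the term $\rbraket{b}{\tilde\beta}\rket{\alpha}_\sys{A}$ to both sides so that the left-hand side becomes exactly $\rket{\alpha}_\sys{A}$, whereas you implicitly assume atomicity is invariant under positive rescaling.
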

\begin{proof}
	By hypothesis, there exist two systems $\sys{A},\sys{B}$ and two atomic states $\rket{\rho}_{\sys{A}},\rket{\tau}_{\sys{B}}$ whose parallel composition $\rket{\sigma}_{\sys{AB}}=\rket{\rho}_{\sys{A}}\rket{\tau}_{\sys{B}}$ is not atomic. By contradiction, suppose that the states of the composite system $\sys{AB}$ are all separable, namely of the form~\eqref{eq:separable_states}. 
	By hypothesis, $\rket{\sigma}_{\sys{AB}}$ is not atomic, and then it admits a decomposition of the form
	\begin{align}
	\rket{\sigma}_{\sys{AB}} = \sum_{i=1}^p\rket{\alpha_i}_{\sys{A}}\rket{\beta_i}_{\sys{B}},\label{eq:decompabsurd}
	\end{align}
%
	where there are two indices $1\leq l<k\leq p$ such that $\rket{\alpha_l}_{\sys{A}}\rket{\beta_l}_{\sys{B}}\not\propto\rket{\alpha_k}_{\sys{A}}\rket{\beta_k}_{\sys{B}}$. This implies that $\rket{\alpha_l}_{\sys{A}}\not\propto\rket{\alpha_k}_{\sys{A}}$ or $\rket{\beta_l}_{\sys{B}}\not\propto\rket{\beta_k}_{\sys{B}}$---say the former without loss of generality. Now, since the decomposition in Eq.~\eqref{eq:decompabsurd} can be taken such that $\rket{\beta_i}_\sys B\neq\rket\varepsilon_\sys B$ for every $1\leq i\leq p$ without loss of generality, then for every $1\leq i\leq p$ there must exist a deterministic effect $\rbra{\tilde e_i}_\sys B$ such that $\rbraket{\tilde e_i}{\beta_i}_\sys B>0$ .
	Let us now choose one of the deterministic effects $\rbra{\tilde{e}_j}_\sys{B}$. Let also $\rket{\tilde\tau}_\sys B$ be such that $\{\rket\tau_\sys B,\rket{\tilde\tau}_\sys B\}$ is a test.	Then 
	\begin{align*}
		&\rbra{\tilde{e}_j}_\sys{B}\rket{\sigma}_{\sys{AB}}+\rbraket{\tilde e_j}{\tilde\tau}_\sys B\rket\rho_\sys A=\\
		&=\rbraket{\tilde{e}_j}{\tau}_{\sys{B}}\rket{\rho}_{\sys{A}}+\rbraket{\tilde e_j}{\tilde\tau}_\sys B\rket\rho_\sys A=\rket\rho_\sys A=\\
		&=\sum_{i=1}^p\rbraket{\tilde{e}_j}{\beta_i}_{\sys{B}}\rket{\alpha_i}_{\sys{A}}+\rbraket{\tilde e_j}{\tilde\tau}_\sys B\rket\rho_\sys A.
		\label{eq:rho_atomic}
	\end{align*}
For a fixed choice of $j$, the term $\rbraket{\tilde e_j}{\beta_j}_\sys B\rket{\alpha_j}_\sys A$ is non-null, and by the atomicity of $\rket\rho_\sys A$, one must then have $\rket{\alpha_j}_\sys A\propto\rket\rho_\sys A$. Since this is true for every $1\leq j\leq p$, we come to a contradiction with the hypothesis that $\rket{\alpha_l}_\sys A\not\propto\rket{\alpha_k}_\sys A$.
This implies that there exists a state of $\sys{AB}$ which is not separable, namely $\Theta$ admits of entangled states.
\end{proof}
A counterexample to the converse of both Propositions~\ref{prop:loc_discr} and~\ref{prop:atom_comp} is simply given by QT, which contains entangled states and also satisfies both local discriminability and atomicity of state-composition.

\section{Entanglement in simplicial theories}\label{sec:entangl_simpl}
In the present section, we characterise simplicial theories from the point of view of the admissibility of entangled states, finding that the converse of Proposition~\ref{prop:loc_discr} holds in general for this family of theories, whereas the converse of Proposition~\ref{prop:atom_comp} holds in the case of simplicial theories satisfying $n$-local discriminability for some positive integer $n$.

\begin{definition}[simplicial theory]
	A \emph{simplicial theory} $\Theta$ is a finite-dimensional OPT where the extremal states of every system $\sys{A}\in\Sys{\Theta}$ are the vertices of a $\D{A}$-simplex.
\end{definition}
Notice that a $\D{A}$-simplex is the convex hull of $\D{A}+1$ vertices, which in the present context are the elements of $\Extst{A}$ (including also the null state $\rket{\varepsilon}_\sys{A}$). The property of simpliciality for an OPT $\Theta$ implies that, for every system $\sys{A}\in\Sys{\Theta}$: (i) $\sys{A}$ has exactly $\D{A}$ non-null extremal states, and (ii) every state of $\sys{A}$ has a unique decomposition as a convex combination of non-null extremal states. We will make extensive use of the two previous properties in the remainder of the paper.
\begin{property}[joint perfect discriminability]\label{prope:jpd}
	Let $\Theta$ be an OPT and $\sys{A}\in\Sys{\Theta}$. A set of states $\{\rket{\rho_i}_\sys A\}_{i=1}^n$ is jointly perfectly discriminable if there exists an observation test $\{ \rbra{a_i}_{\sys{A}}\}_{i=1}^n$ such that:
	\begin{align*}
	\rbraket{a_i}{\rho_j}_\sys{A} = \delta_{ij}\quad\forall i,j\in\lbrace 1,2,\ldots n\rbrace.
	\end{align*}
\end{property}
As it has been recalled, \emph{classical theories} are those simplicial theories where the pure states satisfy property~\ref{prope:jpd}. On the other hand, CT is a simplicial theory satisfying convexity, local discriminability, and joint perfect discriminability of pure states for every system. Further, the following property, i.e.~\emph{causality}, is usually also assumed for CT.
\begin{property}[causality]\label{prope:causality}
	An OPT is causal if every system admits of a unique deterministic effect.
\end{property}
The above definition of causality is in fact equivalently formulated in a more intuitive way, as follows: A theory is causal if the probabilities of the outcomes of preparations are independent of the choice of the observations connected at their output. The latter statement, indeed, can be recognised as the most popular formulation of the notion of causality in a physical theory. However, its content is equivalent to property~\ref{prope:causality} (for a proof, see Ref.~\cite{bookDCP2017}). We now prove that the property of causality is in fact intrinsic to that of simpliciality, thus establishing that all simplicial theories---CT included---are indeed inherently causal.
\begin{theorem}\label{thm:causality}
	Simplicial theories are causal.
\end{theorem}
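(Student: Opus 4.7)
The plan is to exploit the simplex structure of $\ConvH{A}$ to show that any deterministic effect of $\sys{A}$ must act as the canonical ``weight'' functional, which forces uniqueness. Existence of at least one deterministic effect is automatic, since fully coarse-graining any observation test of $\sys{A}$ produces one.

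Fix $\sys{A}\in\Sys{\Theta}$ and let $\rket{1}_\sys{A},\ldots,\rket{\D{A}}_\sys{A}$ be its non-null extremal states. Since $\ConvH{A}$ is a $\D{A}$-simplex with vertices $\rket{\varepsilon}_\sys{A},\rket{1}_\sys{A},\ldots,\rket{\D{A}}_\sys{A}$, every state $\rket{\sigma}_\sys{A}\in\St{A}\subseteq\ConvH{A}$ has a unique decomposition
\begin{align*}
\rket{\sigma}_\sys{A}=\sum_{i=1}^{\D{A}} p_i^\sigma\,\rket{i}_\sys{A},\qquad p_i^\sigma\geq 0,\qquad \sum_i p_i^\sigma\leq 1,
\end{align*}
the defect $1-\sum_i p_i^\sigma\geq 0$ being carried by the null vertex $\rket{\varepsilon}_\sys{A}$ in the barycentric expansion.

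The key lemma to establish is that \emph{every non-null extremal state is deterministic}. Since $\rket{i}_\sys{A}\in\St{A}$ occurs as an event in some preparation test $\{\rket{i}_\sys{A},\rket{\sigma_2}_\sys{A},\ldots,\rket{\sigma_n}_\sys{A}\}$, the full coarse-graining of this test yields the deterministic state $\rket{\rho}_\sys{A}\coloneqq\rket{i}_\sys{A}+\sum_{j=2}^n\rket{\sigma_j}_\sys{A}$. Expanding each $\rket{\sigma_j}_\sys{A}=\sum_k q_k^j\,\rket{k}_\sys{A}$ and matching coefficients against the unique simplicial decomposition of $\rket{\rho}_\sys{A}$ gives
\begin{align*}
p_i^\rho=1+\sum_{j=2}^n q_i^j,\qquad p_k^\rho=\sum_{j=2}^n q_k^j\quad(k\neq i).
\end{align*}
Summing over $k$ and invoking the simplex bound $\sum_k p_k^\rho\leq 1$ alongside the non-negativity of all $q_k^j$ forces $q_k^j=0$ for every $j,k$; hence $\rket{\sigma_j}_\sys{A}=\rket{\varepsilon}_\sys{A}$ for every $j$, and $\rket{i}_\sys{A}=\rket{\rho}_\sys{A}$ is itself deterministic.

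The theorem then drops out: any deterministic effect $\rbra{e}_\sys{A}$ must satisfy $\rbraket{e}{i}_\sys{A}=1$ for every $i$, so by linearity and uniqueness of the decomposition $\rbraket{e}{\sigma}_\sys{A}=\sum_i p_i^\sigma$ for every $\rket{\sigma}_\sys{A}\in\St{A}$. Since this value is independent of the choice of $\rbra{e}_\sys{A}$ and states are separating for effects, any two deterministic effects of $\sys{A}$ coincide, establishing causality. The delicate step is precisely the use of the simplex constraint $\sum_k p_k^\rho\leq 1$ to collapse the coefficient $1+\sum_j q_i^j$ of $\rket{i}_\sys{A}$ in $\rket{\rho}_\sys{A}$ down to exactly $1$; without the sharp normalisation provided by the simplicial structure, the test around $\rket{i}_\sys{A}$ could not be trivialised and the canonical weight functional would not be singled out.
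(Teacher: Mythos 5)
Your proposal is correct and follows essentially the same route as the paper's proof: the core step in both is showing that every non-null extremal state is deterministic by completing it to a deterministic state and using the simplex normalisation bound $\sum_k p_k\leq 1$ (your barycentric coordinates $p_k^\sigma$ are exactly the paper's dual functionals $\rbraket{b_k}{\sigma}$, and your sum over $k$ is the paper's effect $\rbra{e}=\sum_j\rbra{b_j}$) to force the completion to vanish. The uniqueness argument is likewise the same, pinning any deterministic effect to the canonical weight functional via linear independence and the separating property.
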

\begin{proof}
	Let $\Theta$ be a simplicial theory, $\sys{A}\in\Sys{\Theta}$ and $\rket{i}_{\sys{A}}$ an extremal and non-null state. Suppose that $\rket{i}_{\sys{A}}$ is not deterministic: then it can be completed to a deterministic one, namely there exists a non-null state $\rket{\sigma}_{\sys{A}}$ such that $\rket{\rho}_{\sys{A}}=\rket{i}_{\sys{A}}+\rket{\sigma}_{\sys{A}}$ is deterministic. Now, let $\rbra{b_j}_{\sys{A}}$ for $j=1,\ldots,\D{A}$ be the linear functionals such that $\rbraket{b_j}{j'}_{\sys{A}}=\delta_{jj'}$ for every non-null extremal state $\rket{j'}_{\sys{A}}$: this implies, by simpliciality, that $0\leq \rbraket{b_j}{\tau}_{\sys{A}} \leq 1$ for all $\rket{\tau}_{\sys{A}}\in\St{A}$ and all $j$. One then has:
	\begin{align*}
		\rbraket{b_i}{\rho}_{\sys{A}} = \rbraket{b_i}{i}_{\sys{A}}+\rbraket{b_i}{\sigma}_{\sys{A}}= 1+\rbraket{b_i}{\sigma}_{\sys{A}}\leq 1.
	\end{align*}
	namely	$\rbraket{b_i}{\sigma} =0$ and $\rbraket{b_i}{\rho}=1$. Then, posing $\Extst{A}\coloneqq\lbrace \rket{k}_{\sys{A}} \rbrace_{k=0}^{\D{A}}$, with $\rket0_\sys A\coloneqq\rket\varepsilon_\sys A$ one has the following:
	\begin{equation*}
	\rket{\sigma}_{\sys{A}}=\sum_{k=0}^{\D{A}}p_k\rket{k}_{\sys{A}},\qquad p_k\geq 0\ \forall k,\ p_i=0,\ \sum_{k=0}^{\D{A}}p_k\leq 1.
	\end{equation*}
Let us now pose
		\begin{equation*}
\rbra{e}_{\sys{A}}\coloneqq\sum_{j=1}^{\D{A}}\rbra{b_j}_{\sys{A}}.
		\end{equation*}
		Clearly $0\leq \rbraket{e}{\tau}_{\sys{A}} \leq 1$ 
		for all $\rket{\tau}_{\sys{A}}\in\St{A}$, and $\rbraket{e}{i}_{\sys{A}}=1$, implying
	$\rbraket{e}{\sigma}_{\sys{A}}=0$, i.e.~$p_k=0\ \forall k\neq 0$. This shows that $\rket{i}$ is deterministic. Being independent of $i$, the above argument proves that all non-null extremal states are deterministic. Extending then the above argument to arbitrary convex combinations of non-null extremal states of the form $\sum_{k=0}^{\D{A}}p_k\rket{k}_{\sys{A}}$, it is easy to see that a state is deterministic if and only if $\sum_{k=1}^{\D{A}}p_k=1$. In other words, the set of deterministic states coincides with the convex hull of extremal non-null states. Now, the effect $\rbra{e}_\rA$ amounts to unit on all deterministic states, hence it is deterministic. Since the non-null extremal states of a simplicial theory are complete and linearly independent, there exists a unique effect $\rbra{\tilde{e}}_{\sys{A}}$ such that $\rbraket{\tilde{e}}{j}_\rA=1$ for all extremal non-null states $\rket{j}_{\sys{A}}$. Thus the  deterministic effect $\rbra{e}_{\sys{A}}$ is unique.
\end{proof}
Now we show that, for any simplicial theory, the converse of Proposition~\ref{prop:loc_discr} is also true.
\begin{theorem}\label{thm:loc_discr_simplicial}
	Let $\Theta$ be a simplicial theory. Then $\Theta$ admits of entangled states if and only if it does not satisfy local discriminability.
\end{theorem}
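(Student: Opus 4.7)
One direction is immediate from Proposition \ref{prop:loc_discr}: violation of local discriminability already implies entangled states in any OPT. So the plan is to prove the converse: if a simplicial theory $\Theta$ satisfies local discriminability, then every bipartite state is separable. The strategy is to identify exactly the non-null extremal states of a composite system $\sys{AB}$ as the product pairs $\rket{i}_\sys{A}\rket{j}_\sys{B}$ of non-null extremal local states, and then invoke simpliciality.

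First I would fix systems $\sys A,\sys B$ and the non-null extremal states $\{\rket{i}_\sys A\}_{i=1}^{\D A}$, $\{\rket{j}_\sys B\}_{j=1}^{\D B}$. Recall from the proof of Theorem~\ref{thm:causality} the dual functionals $\rbra{b_i}_\sys A$ defined by $\rbraket{b_i}{i'}_\sys A=\delta_{ii'}$, which are legitimate effects since $0\le\rbraket{b_i}{\tau}_\sys A\le 1$ on all states by simpliciality. By Proposition~\ref{prop:local_discriminability}, local discriminability gives $\D{AB}=\D A\D B$, so the $\D A\D B$ product states $\rket{i}_\sys A\rket{j}_\sys B$ span $\StR{AB}$ and are therefore a linear basis; the local effects $\rbra{b_i}_\sys A\rbra{b_j}_\sys B$ form the dual basis, in particular $\rbra{b_i}_\sys A\rbra{b_j}_\sys B\rket{i'}_\sys A\rket{j'}_\sys B=\delta_{ii'}\delta_{jj'}$, and by local discriminability they are separating for bipartite states.

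Next I would show each product $\rket{i}_\sys A\rket{j}_\sys B$ is extremal in $\St{AB}$. Suppose $\rket{i}_\sys A\rket{j}_\sys B=p\rket{\sigma_1}_\sys{AB}+(1-p)\rket{\sigma_2}_\sys{AB}$ with $p\in(0,1)$. Evaluating $\rbra{b_i}_\sys A\rbra{b_j}_\sys B$ gives a convex combination of two numbers in $[0,1]$ equal to $1$, forcing $\rbra{b_i}_\sys A\rbra{b_j}_\sys B\rket{\sigma_k}_\sys{AB}=1$; evaluating $\rbra{b_{i'}}_\sys A\rbra{b_{j'}}_\sys B$ for any other pair $(i',j')$ similarly forces the value $0$. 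Since the family $\{\rbra{b_{i'}}_\sys A\rbra{b_{j'}}_\sys B\}$ separates bipartite states, $\rket{\sigma_1}_\sys{AB}=\rket{\sigma_2}_\sys{AB}$, proving extremality. The $\rket{i}_\sys A\rket{j}_\sys B$ are moreover non-null: by Theorem~\ref{thm:causality} the extremal local states are deterministic, so their parallel composition is deterministic and distinct from $\rket\varepsilon_\sys{AB}$.

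Finally, by the definition of simplicial theory, $\sys{AB}$ has exactly $\D{AB}=\D A\D B$ non-null extremal states; having exhibited that many distinct non-null extremal product states, these must exhaust $\Extst{AB}\setminus\{\rket\varepsilon_\sys{AB}\}$. By the uniqueness of the convex decomposition into non-null extremal states, every state of $\sys{AB}$ is a convex combination of product states $\rket{i}_\sys A\rket{j}_\sys B$, which is a special case of the separable form~\eqref{eq:separable_states}. Hence $\Theta$ admits no entangled states, completing the converse. The main technical point is the extremality argument for product states, which is precisely where both hypotheses enter in tandem: simpliciality gives the dual functionals $\rbra{b_i}_\sys A$, and local discriminability promotes their tensor products to a separating family for $\sys{AB}$.
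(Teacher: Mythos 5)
The forward direction and the overall architecture of your converse (show that under local discriminability the products $\rket{i}_\sys A\rket{j}_\sys B$ exhaust the non-null extremal states of $\sys{AB}$, then invoke uniqueness of convex decompositions) match the paper. The gap is in your extremality argument for product states. You evaluate $\rbra{b_i}_\sys A\rbra{b_j}_\sys B$ on the convex components $\rket{\sigma_k}_\sys{AB}$ and assert that the resulting numbers lie in $[0,1]$, so that a convex combination equal to $1$ forces both terms to equal $1$. This is unjustified. The $\rbra{b_i}_\sys A$ are only linear functionals dual to the local extremal states; taking values in $[0,1]$ on $\St{A}$ does not make them effects of the theory, since the paper deliberately avoids the no-restriction hypothesis. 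More importantly, even granting that each local functional is positive on local states, the tensor product of positive functionals need not be positive---let alone bounded by $1$---on $\St{AB}$, because a priori $\St{AB}$ can be strictly larger than the convex hull of product states; this is exactly the entanglement-witness phenomenon, and exactly what is at stake in the theorem. A component $\rket{\sigma_k}_\sys{AB}$ decomposes over the as-yet-unknown extremal states of $\sys{AB}$, on which you have no control over the values of $\rbra{b_i}_\sys A\rbra{b_j}_\sys B$: nothing forbids, say, the value $2$ on one extremal component and $0$ on another, averaging to $1$. The boundedness you need is essentially equivalent to the separability you are trying to prove, so the argument is circular at this point.

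The paper closes this gap by applying only the genuine, unique deterministic effect $\rbra{e}_\sys A$ (guaranteed by Theorem~\ref{thm:causality}) to the decomposition $\rket{i}_\sys A\rket{j}_\sys B=\sum_{k\in I_{ij}}p^{ij}_k\rket{k}_\sys{AB}$. Since each $\rbra{e}_\sys A\rket{k}_\sys{AB}$ is a deterministic state of $\sys B$ and the convex decomposition of the extremal state $\rket{j}_\sys B$ is trivial, one gets $\rbra{e}_\sys A\rket{k}_\sys{AB}=\rket{j}_\sys B$ for every $k\in I_{ij}$; this shows the sets $I_{ij}$ are nonempty and pairwise disjoint, and the dimension count $\D{AB}=\D{A}\D{B}$ from Proposition~\ref{prop:local_discriminability} then forces each $I_{ij}$ to be a singleton, which is precisely the extremality of product states you wanted. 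If you replace your witness argument by this marginalization-plus-counting step, the remainder of your proof (exhaustion of $\Extst{AB}$ and separability of every state) goes through as written.
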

\begin{proof}
($\Leftarrow$) The implication holds true by Proposition~\ref{prop:loc_discr}.
($\Rightarrow$) By hypothesis, the theory $\Theta$ is simplicial.
Let us denote the non-null extremal states of a system $\sys{X}$ by $\rket{l}_{\sys{X}}$, for $l=1,\ldots,\D{X}$. As a straightforward consequence of Theorem~\ref{thm:causality}, such states are also deterministic. Then, for all  $\sys{A},\sys{B}\in\Sys{\Theta}$ and all non-null $\rket{i}_{\sys{A}}\in\Extst{A},\rket{j}_{\sys{B}}\in\Extst{B}$, there exists 
a non-empty set $I_{ij}\subseteq\lbrace1,\ldots,\D{AB}\rbrace$ such that 
\begin{align}\label{eq:decomposition_pure}
	\rket{i}_{\sys{A}}\rket{j}_{\sys{B}}=
	\sum_{k\in I_{ij}}p_{k}^{ij}\rket{k}_{\sys{AB}},
\end{align}
with $p^{ij}_k>0$ for $k\in I_{ij}$ and $\sum_{k\in I_{ij}}p^{ij}_k=1$.
Let $\rbra{e}_{\sys{A}}\in\Eff{A}$ be the (unique) deterministic effect of $\sys{A}$.
Since for all $k\in\lbrace 1,\ldots,\D{AB}\rbrace$ $\rbra{e}_{\sys{A}}\rket{k}_{\sys{AB}}\in\St{B}$ is deterministic, it is also non-vanishing for all $k$.
Moreover, let us suppose that there exist two different pairs of indices $(\tilde{i},\tilde{j}),(\tilde{i}',\tilde{j}')$ such that $\tilde{J}\coloneqq I_{\tilde{i}\tilde{j}}\cap I_{\tilde{i}'\tilde{j}'}\neq \emptyset$. Without loss of generality, we can assume $\tilde{j}\neq\tilde{j}'$. Choose $\tilde{k}\in\tilde{J}$: being the convex decomposition into non-null extremal states unique by simpliciality, from Eq.~\eqref{eq:decomposition_pure} we conclude that $\rbra{e}_{\sys{A}}\rket{\tilde{k}}_{\sys{AB}}=\rket{\tilde{j}}_{\sys{B}}=\rket{\tilde{j}'}_{\sys{B}}$,
which is absurd. Thus $I_{ij}\cap I_{i'j'}= \emptyset$ for every different pair of indices. 
Now, 
let us suppose that $\Theta$ satisfies local discriminability. Then, by Proposition~\ref{prop:local_discriminability}, Eq.~\eqref{eq:dimAB_equal} holds, and $I_{ij}\subseteq\{1,\ldots,\D A\D B\}, \forall i,j$. By conditions $I_{ij}\neq\emptyset$ and $I_{ij}\cap I_{i'j'}=\emptyset$, every set $I_{ij}$ must be a singleton. This implies that for every pair $i,j$ there exists $\rket k_{\rA\rB}$ such that $\rket k_{\rA\rB}=\rket i_\rA\rket j_\rB$, namely the parallel composition of non-null extremal states is a non-null extremal state.
Accordingly, Eq.~\eqref{eq:dimAB_equal}
implies that the states of the form $\rket i_\rA\rket j_\rB$ exhaust the set of non-null extremal states of the composite system $\rA\rB$. As a consequence, $\Theta$ does not admit of entangled states.
\end{proof}
Next, we prove that, for a simplicial theory satisfying $n$-local discriminability for some positive integer $n$, the converse of Proposition~\ref{prop:atom_comp} is also true. In order to do this, we need the following result, that is proved in Appendix~\ref{app:A}.
\begin{proposition}\label{puropuro} Let $\Theta$ be a simplicial OPT satisfying $n$-local discriminability for some positive integer $n$. For every pair of systems $\rA,\rB\in\Sys{\Theta}$, every non-null extremal state $\rket{\lambda}_{\rA\rB}\in\Extst{AB}$ convexly refines the parallel composition of some pair of pure states $\rket{\rho}_{\sys{A}}\rket{\sigma}_{\sys{B}}\in\St{AB}$.
\end{proposition}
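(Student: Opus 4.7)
My plan is a proof by contradiction combining the dual-basis structure given by simpliciality with a separable-shadow construction and $n$-local discriminability. Suppose, toward contradiction, that some non-null extremal $\rket{\lambda}_{\rA\rB}$ lies in no set $I_{ij}$ indexing the (unique-by-simpliciality) convex decomposition $\rket{i}_\rA\rket{j}_\rB=\sum_{k\in I_{ij}}p_k^{ij}\rket{k}_{\rA\rB}$ of a product of pure states.

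First I would construct the pure-dual effect $\rbra{b_\lambda}_{\rA\rB}$, characterised by $\rbraket{b_\lambda}{k}_{\rA\rB}=\delta_{k\lambda}$ on non-null extremal states (a legitimate effect by the bound argument used in the proof of Theorem~\ref{thm:causality}). By hypothesis, $\rbra{b_\lambda}_{\rA\rB}$ vanishes on every product $\rket{i}_\rA\rket{j}_\rB$ and hence, by bilinearity, on the span of all separable states of $\rA\rB$, while $\rbraket{b_\lambda}{\lambda}_{\rA\rB}=1$. In parallel I would build the separable shadow
\[
\rket{\lambda'}_{\rA\rB}\coloneqq\sum_i\rket{i}_\rA\boxtimes\rket{\tau_i}_\rB,\qquad \rket{\tau_i}_\rB\coloneqq\rbra{b_i}_\rA\rket{\lambda}_{\rA\rB},
\]
where the $\rbra{b_i}_\rA$ are the pure-dual effects of $\rA$. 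Expanding an arbitrary $\rbra{a}_\rA=\sum_i\rbraket{a}{i}_\rA\rbra{b_i}_\rA$ in the pure-state dual basis of $\rA$ yields $\rbra{a}_\rA\rbra{b}_\rB\rket{\lambda'}_{\rA\rB}=\rbra{a}_\rA\rbra{b}_\rB\rket{\lambda}_{\rA\rB}$: the shadow agrees with the original on every product effect.

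To promote this agreement to full equality $\rket{\lambda}_{\rA\rB}=\rket{\lambda'}_{\rA\rB}$ I would invoke $n$-local discriminability. For $n=1$ the argument closes immediately: product effects already span $\EffR{AB}$, so agreement on them forces $\rket{\lambda}_{\rA\rB}=\rket{\lambda'}_{\rA\rB}$, and uniqueness of the convex decomposition of the extremal $\lambda$---applied to the expansion of $\lambda'$ through the pure decompositions of the $\rket{\tau_i}_\rB$ and the $I_{ij}$-decompositions of the products $\rket{i}_\rA\rket{j}_\rB$ that appear---then forces $\lambda\in I_{ij}$ for some pair, contradicting the assumption. For $n\geq 2$ I would lift the comparison to $(\rA\rB)^{\boxtimes N}$ with $N$ chosen so that $2N\geq n$, propagating the single-copy product-effect agreement to the $N$-fold level via identity~\eqref{eq:parallel_commuting}, and then using $n$-local discriminability on $(\rA\rB)^{\boxtimes N}$ to bridge this partial agreement into equality of $N$-fold states; cancellation by applying the unique deterministic effect on $N-1$ copies would then yield the bipartite equality and the same contradiction.

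The main obstacle is precisely this last inference for general $n\geq 2$: without local tomography, agreement on product effects is genuinely weaker than full equality of bipartite states, and the gap must be closed by carefully exploiting the multipartite character of $n$-local discriminability, interweaving the simplicial structure on each copy with the spanning property of products of $\leq n$-partite effects on $(\rA\rB)^{\boxtimes N}$. Executing the translation between the bipartite and $2N$-partite levels, while controlling how non-copy-factoring $\leq n$-partite effects act on $\rket{\lambda}_{\rA\rB}^{\boxtimes N}$ versus $\rket{\lambda'}_{\rA\rB}^{\boxtimes N}$, is the technical crux and is where I would expect the subtlest part of the proof to sit.
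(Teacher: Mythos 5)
Your $n=1$ argument is sound, but that is the trivial case (under local discriminability every state of $\rA\rB$ is separable, so a non-null extremal state refining no product is immediately impossible). The substance of the proposition lies entirely in the case $n\geq 2$, and there your proposal has a genuine gap that you yourself flag: you never close the inference from ``$\rket{\lambda}_{\rA\rB}$ and its separable shadow $\rket{\lambda'}_{\rA\rB}$ agree on all product effects'' to ``$\rket{\lambda}_{\rA\rB}=\rket{\lambda'}_{\rA\rB}$''. The proposed lift to $(\rA\rB)^{\boxtimes N}$ does not obviously repair this, because the effects that $n$-local discriminability guarantees to be separating for the $2N$-partite system are conic combinations of products of $\leq n$-partite effects whose parts may straddle copies---e.g.\ an effect acting jointly on $\rA_1\rB_2$, or on $\rB_1\rB_3$---and single-copy agreement on $\rbra{a}_\rA\rbra{b}_\rB$ gives no control over how such effects evaluate on $\rket{\lambda}^{\boxtimes N}$ versus $\rket{\lambda'}^{\boxtimes N}$ (indeed $\rket{\lambda'}$ need not even be a physical state, only a vector in $\StR{AB}$, so its behaviour under genuinely multipartite effects is precisely the unknown). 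As written, the plan assumes the conclusion in the only hard case.

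The paper closes this gap by a different, genuinely multipartite construction. It first proves (Lemma~\ref{lem:separable}) that in an $(n+1)$-partite system every non-null extremal state convexly refines some \emph{separable} state---this is where $n$-local discriminability enters, via a spanning-plus-linear-independence count against the simplex structure---and (Lemma~\ref{lem:marginal}) that such refinement survives marginalisation onto a pair of subsystems taken from opposite sides of the factorisation. It then forms an $(n+1)$-partite state $\rket\Psi$ built from $n$ copies of $\rket\lambda$, arranged so that the $n$ copies of $\rA$ are fused into a single party while the $n$ copies of $\rB$ remain distinct parties; any separable decomposition of $\rket\Psi$ must then factor out some pair $\rA_j\rB_j$ whose marginal is a product state, whereas the marginal of $\rket\Psi$ on that pair is $\rket\lambda$ itself. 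Extremality of $\rket\lambda$ in the simplex then forces $\rket\lambda$ to coincide with a refinement of a product state, giving the contradiction. To salvage your route you would in any case need to establish something like Lemma~\ref{lem:separable} first, at which point the shadow construction becomes redundant.
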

We are now in position to prove the following theorem.
\begin{theorem}\label{thm:atomicity}
Let $\Theta$ be a simplicial theory satisfying $n$-local discriminability for some positive integer  $n$. Then $\Theta$ admits of entangled states if and only if it does not satisfy atomicity of state-composition.
\end{theorem}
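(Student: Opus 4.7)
For the sufficiency direction ($\Leftarrow$) --- that failure of atomicity of state-composition implies the existence of entangled states --- no new argument is needed: Proposition~\ref{prop:atom_comp} delivers this implication for arbitrary OPTs, without requiring simpliciality or $n$-local discriminability. The plan is therefore to concentrate on the forward direction, assuming that $\Theta$ admits entangled states and deriving a contradiction from atomicity of state-composition via a counting argument on non-null extremal states.

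First I would invoke Theorem~\ref{thm:loc_discr_simplicial} to secure a pair of systems $\sys{A}, \sys{B}$ with $\D{AB} > \D{A}\D{B}$. Next I would exploit a key consequence of simpliciality: by uniqueness of the convex decomposition of any element of the state cone into non-null extremal states, a state is atomic if and only if it is proportional to a single non-null extremal state. Combined with Theorem~\ref{thm:causality} --- whose proof shows that every non-null extremal state of a simplicial theory is deterministic, hence pure --- this yields that, under atomicity of state-composition, for each pair of non-null extremal states $\rket{i}_\sys{A}, \rket{j}_\sys{B}$ the product $\rket{i}_\sys{A}\rket{j}_\sys{B}$ must coincide with some $\rket{\lambda_{k(i,j)}}_{\sys{AB}} \in \Extst{AB}$; the proportionality constant is pinned to $1$ by determinism of both sides.

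The concluding step is to combine this with Proposition~\ref{puropuro}: every non-null extremal state of $\sys{AB}$ convexly refines the parallel composition of some pair of pure states, and under atomicity of state-composition such a product admits only a single-term decomposition, so every non-null extremal of $\sys{AB}$ is of the form $\rket{i}_\sys{A}\rket{j}_\sys{B}$. The map $(i,j) \mapsto k(i,j)$ from the $\D{A}\D{B}$ pairs of non-null factor extremals to the $\D{AB}$ non-null extremals of $\sys{AB}$ is therefore surjective, yielding $\D{A}\D{B} \geq \D{AB}$ in contradiction with the strict inequality from Theorem~\ref{thm:loc_discr_simplicial}. The main delicate point is converting the ``convex refinement'' clause of Proposition~\ref{puropuro} into the stronger statement that the extremal state of $\sys{AB}$ actually \emph{equals} (rather than merely appears in the decomposition of) the factor-pair product, but this is immediate from the uniqueness of the simplicial decomposition together with atomicity.
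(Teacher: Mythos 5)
Your proposal is correct and follows essentially the same route as the paper: the forward direction is the same counting argument combining Proposition~\ref{puropuro} with the strict inequality $\D{AB}>\D{A}\D{B}$ obtained from Theorem~\ref{thm:loc_discr_simplicial} and Proposition~\ref{prop:local_discriminability}, merely run as a contradiction (assuming atomicity forces every non-null extremal of $\sys{AB}$ to be a product, giving $\D{A}\D{B}\geq\D{AB}$) rather than as the paper's direct pigeonhole (some product must be refined by at least two extremals, hence is non-atomic). Your handling of the ``refines'' versus ``equals'' point via uniqueness of the simplicial decomposition matches what the paper establishes in Lemma~\ref{lem:refinement} and the proof of Theorem~\ref{thm:causality}.
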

\begin{proof}
($\Leftarrow$) The implication holds true by Proposition~\ref{prop:atom_comp}. 
($\Rightarrow$) According to Proposition~\ref{puropuro}, under the hypothesis of the theorem every non-null extremal state of a bipartite system refines some parallel composition of extremal states. Since by hypothesis the theory admits of entangled states, by Theorem \ref{thm:loc_discr_simplicial} and Proposition \ref{prop:local_discriminability} there exists a pair of systems $\rA,\rB\in\Sys{\Theta}$ such that $D_{\rA\rB}>D_{\rA}D_{\rB}$. Then, there are more non-null extremal states of the composite system than parallel compositions of (non-null) extremal states. Thus, there must be a parallel composition of extremal states $\rket{i}_\rA\rket{j}_\rB$ that is refined by more than one non-null extremal states, i.e. $\rket{i}_\rA\rket{j}_\rB=\sum_{k\in I_{ij}}p^{ij}_k\rket{k}_{\rA\rB}$, namely atomicity of state-composition does not hold. 
\end{proof}
The following lemma provides a characterisation of the parallel composition of states in simplicial theories satisfying $n$-local discriminability for some positive integer $n$.
\begin{lemma}\label{lem:refinement}
	Let $\Theta$ be a simplicial theory satisfying $n$-local discriminability for some positive integer $n$. Then, for all systems $\sys{A},\sys{B}\in \Sys{\Theta}$ and non-null extremal states $\rket{k}_{\sys{AB}}\in\Extst{AB}$, there exists a unique product of non-null extremal states $\rket{i_kj_k}_{\sys{AB}}=\rket{i_k}_{\sys{A}}\rket{j_k}_{\sys{B}}$ such that $\rket{k}_{\sys{AB}}$ convexly refines $\rket{i_kj_k}_{\sys{AB}}$.
\end{lemma}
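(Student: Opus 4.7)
The plan is to split the claim into existence and uniqueness, obtaining existence directly from Proposition~\ref{puropuro} and uniqueness by reworking the disjointness argument that already appears midway through the proof of Theorem~\ref{thm:loc_discr_simplicial}. Throughout I will use the fact, established in Theorem~\ref{thm:causality}, that in a simplicial theory every non-null extremal state is deterministic and every system carries a unique deterministic effect $\rbra{e}_{\sys X}$; in particular, by the uniqueness of the deterministic effect of $\sys{AB}$, one has $\rbra{e}_{\sys{AB}} = \rbra{e}_{\sys A}\rbra{e}_{\sys B}$.

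For existence, Proposition~\ref{puropuro} guarantees that every non-null extremal state $\rket{k}_{\sys{AB}}$ convexly refines some parallel composition $\rket{i_k}_{\sys A}\rket{j_k}_{\sys B}$ of pure states, which by Theorem~\ref{thm:causality} are precisely non-null extremal states of the factors. For uniqueness, I would argue by contradiction: suppose $\rket{k}_{\sys{AB}}$ convexly refines two different products $\rket{i}_{\sys A}\rket{j}_{\sys B}$ and $\rket{i'}_{\sys A}\rket{j'}_{\sys B}$ with $(i,j)\neq(i',j')$. By simpliciality, each of these products admits a unique convex decomposition into non-null extremal states of $\sys{AB}$, in which $\rket{k}_{\sys{AB}}$ occurs with strictly positive coefficient.

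I would then apply the deterministic effect $\rbra{e}_{\sys A}$ to both decompositions. Since $\rket{i}_{\sys A}$ and $\rket{i'}_{\sys A}$ are deterministic, the left-hand sides collapse to $\rket{j}_{\sys B}$ and $\rket{j'}_{\sys B}$ respectively; on the right-hand sides, each summand $\rbra{e}_{\sys A}\rket{k'}_{\sys{AB}}$ is a non-null deterministic state of $\sys B$, since $\rbra{e}_{\sys A}\rbra{e}_{\sys B}$ is the unique deterministic effect of $\sys{AB}$. One thus obtains convex decompositions of the extremal states $\rket{j}_{\sys B}$ and $\rket{j'}_{\sys B}$ into deterministic states, and by simpliciality of $\sys B$ together with normalisation every summand must equal $\rket{j}_{\sys B}$ (respectively $\rket{j'}_{\sys B}$). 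In particular $\rbra{e}_{\sys A}\rket{k}_{\sys{AB}} = \rket{j}_{\sys B} = \rket{j'}_{\sys B}$, so $j=j'$; the symmetric argument using $\rbra{e}_{\sys B}$ yields $i=i'$, contradicting $(i,j)\neq(i',j')$.

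The main step is conceptual rather than computational: one has to recognise that the disjointness-of-refinement argument used midway through the proof of Theorem~\ref{thm:loc_discr_simplicial} is exactly the ingredient needed here. Once Proposition~\ref{puropuro} supplies existence, uniqueness follows from the uniqueness of the simplicial decomposition together with the factorisation of the deterministic effect of $\sys{AB}$ into local ones; no further use of $n$-local discriminability is required beyond what is already invoked in Proposition~\ref{puropuro}.
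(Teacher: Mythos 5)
Your proposal is correct and follows essentially the same route as the paper: the paper's proof of this lemma is literally the two-line remark that existence comes from Proposition~\ref{puropuro} and uniqueness was already established inside the proof of Theorem~\ref{thm:loc_discr_simplicial}, via the same disjointness argument (apply the local deterministic effects, use uniqueness of the simplicial decomposition to get $\rbra{e}_{\sys A}\rket{k}_{\sys{AB}}=\rket{j}_{\sys B}$ and symmetrically for $\sys A$). You have simply written out in full the two ingredients the paper cites.
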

\begin{proof}
Existence is provided by Proposition~\ref{puropuro}, and uniqueness has been proved in the proof of Theorem \ref{thm:loc_discr_simplicial}.
\end{proof}
Notice that the map $k\mapsto i_kj_k$ of the above lemma is not injective in general. In particular, the map is not injective as long as the theory does not satisfy atomicity of state-composition. As a straightforward consequence of Lemma~\ref{lem:refinement}, we are now in position to provide a general classification of the composite state-spaces in simplicial theories satisfying $n$-local discriminability for some integer $n$.
\begin{theorem}[classification of composite state-spaces in simplicial theories]\label{thm:classification}
	Let $\Theta$ be a simplicial theory satisfying $n$-local discriminability for some integer $n$. For every pair of systems $\sys{A},\sys{B}\in\Sys{\Theta}$, the state-space $\St{AB}$ is classified by the following: (i) a choice of the dimension of the composite system $\D{AB}$; (ii) an unambiguous labeling for the pure states of $\sys{AB}$ as $\rket{(ij)_k}_{\sys{AB}}$---for $i\in\lbrace 1,\ldots,\D{A}\rbrace$, $j\in\lbrace 1,\ldots,\D{B}\rbrace$, and $k$ in finite sets $I_{ij}$---and a choice of probability distributions $p_{k}^{ij}$, such that:
		\begin{align}\label{eq:classification_states}
			\rket{i}_{\sys{A}}\rket{j}_{\sys{B}} = \sum_{k\in I_{ij}} p_{k}^{ij}\rket{(ij)_k}_{\sys{AB}}
		\end{align}
		with $p_{k}^{ij}>0$ and $\sum_{k\in I_{ij}} p_{k}^{ij}=1$ for all non-null extremal states  $\rket{i}_{\sys{A}}\in\Extst{A},\rket{j}_{\sys{B}}\in\Extst{B}$.
\end{theorem}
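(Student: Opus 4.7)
The plan is to leverage Lemma~\ref{lem:refinement} directly: the theorem asserts the existence of a particular labeling plus probability data, and the Lemma already supplies the pointwise information needed to build this data globally.

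First, I would unpack the content of Lemma~\ref{lem:refinement}: to each non-null extremal state $\rket{k}_{\sys{AB}}\in\Extst{AB}$ it associates a \emph{unique} pair $(i_k,j_k)$ of non-null extremal states of the local systems, such that $\rket{k}_{\sys{AB}}$ convexly refines $\rket{i_k}_{\sys{A}}\rket{j_k}_{\sys{B}}$. This yields a well-defined (though generally non-injective) map $\Phi:\Extst{AB}\setminus\{\rket\varepsilon_{\sys{AB}}\}\to\Extst{A}\times\Extst{B}$ sending $k\mapsto(i_k,j_k)$. For each pair of non-null extremal states $\rket{i}_{\sys{A}},\rket{j}_{\sys{B}}$ define $I_{ij}:=\Phi^{-1}(i,j)$; each $I_{ij}$ is finite (since $\D{AB}<+\infty$) and non-empty, because the product state $\rket{i}_{\sys{A}}\rket{j}_{\sys{B}}$ admits, by simpliciality of $\sys{AB}$, a convex decomposition into non-null extremal states, and by the uniqueness part of Lemma~\ref{lem:refinement} every extremal state appearing in that decomposition must lie in $\Phi^{-1}(i,j)$.

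Next I would extract the coefficients. By simpliciality, the unique decomposition of $\rket{i}_{\sys{A}}\rket{j}_{\sys{B}}$ into non-null extremal states has the form $\rket{i}_{\sys{A}}\rket{j}_{\sys{B}}=\sum_{k\in I_{ij}}p^{ij}_k\rket{k}_{\sys{AB}}$ with $p^{ij}_k>0$. Normalisation $\sum_{k\in I_{ij}}p^{ij}_k=1$ follows from Theorem~\ref{thm:causality}: since all non-null extremal states are deterministic, so are $\rket{i}_{\sys{A}}$, $\rket{j}_{\sys{B}}$ and each $\rket{k}_{\sys{AB}}$, so applying the unique deterministic effect $\rbra{e}_{\sys{AB}}$ to both sides yields $1=\sum_{k\in I_{ij}}p^{ij}_k$. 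The relabeling $\rket{k}_{\sys{AB}}\leftrightarrow\rket{(i_kj_k)_{k}}_{\sys{AB}}$ is then unambiguous precisely because $\Phi$ is a function: each non-null extremal state of $\sys{AB}$ receives exactly one tag $(ij)_k$ with $k\in I_{ij}$. This establishes Eq.~\eqref{eq:classification_states}.

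Finally I would argue that these data genuinely classify $\St{AB}$: the simplicial structure on $\sys{AB}$ is fixed once one specifies $\D{AB}$ (equivalently the cardinality $\sum_{i,j}|I_{ij}|$ of the set of non-null extremal states) together with the labeling via $\{I_{ij}\}$ and the probability distributions $\{p^{ij}_k\}$, because any state of $\sys{AB}$ is a unique convex combination of the $\rket{(ij)_k}_{\sys{AB}}$, and the embedding of the product cone $\St{A}\boxtimes\St{B}$ into $\St{AB}$ is completely determined by Eq.~\eqref{eq:classification_states} together with bilinearity of $\boxtimes$ over conic combinations. I do not expect a serious obstacle: the only subtlety is justifying non-emptiness of every $I_{ij}$ and checking that uniqueness in Lemma~\ref{lem:refinement} really forces the support of the decomposition of $\rket{i}_{\sys{A}}\rket{j}_{\sys{B}}$ to sit inside $\Phi^{-1}(i,j)$, both of which are direct applications of simpliciality and of the uniqueness clause already available.
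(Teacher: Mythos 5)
Your proposal is correct and follows essentially the same route as the paper, which presents Theorem~\ref{thm:classification} as a direct corollary of Lemma~\ref{lem:refinement}: the unique product state refined by each non-null extremal state of $\sys{AB}$ induces the partition $\{I_{ij}\}$ of the pure states, simpliciality gives the unique positive-coefficient decomposition, and determinism of non-null extremal states (Theorem~\ref{thm:causality}) gives normalisation. Your write-up merely makes explicit the bookkeeping (the map $\Phi$, non-emptiness of each $I_{ij}$, and the two inclusions identifying $I_{ij}$ with the support of the decomposition) that the paper leaves implicit.
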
		
Notice that not all the choices of the coefficients $p_k^{ij}$ in Eq.~\eqref{eq:classification_states} necessarily lead to a consistent theory. However, there are cases of classical theories beyond CT, proving that a consistent choice of $p_k^{ij}$ is possible~\cite{DEP2020}.

We conclude our investigation on simplicial theories proving that these theories, whereas they may admit of entangled states, cannot admit of superposition (in an operational sense), or purification (under the hypothesis of $n$-local discriminability for some positive integer $n$). We formulate the superposition principle in three possible ways.
\begin{property}[superposition principle]\label{prope:superposition}
	Let $\mathcal{D}=\lbrace \rket{\rho_i}_{\sys{A}} \rbrace_{i=1}^d$, with $d\geq 2$, be any maximal set of jointly perfectly discriminable pure states of a system $\sys{A}$. Let
	$\mathscr{A}=\lbrace \rbra{a_i} \rbrace_{i=1}^d$ denote an observation such that $\rbraket{a_i}{\rho_j}=\delta_{ij}$ for all $i,j\in\lbrace 1,\ldots, d \rbrace$, $\mathbf{p}=\lbrace p_i \rbrace_{i=1}^d$
	a probability distribution, and $\rket{\sigma}_\sys{A}$ a pure state of $\sys A$.
	\begin{itemize}
		\item[(i)] \emph{Ultraweak:} For every choice of
		$\mathbf{p}$, there exist $\mathscr{A}$ and $\rket{\sigma}_\sys{A}$ such that the following holds:
		\begin{align}\label{eq:superposition}
		p_i=\rbraket{a_i}{\sigma}_{\sys{A}}\quad \forall i=1,\ldots,d.
		\end{align}
		\item[(ii)] \emph{Weak:} For every choice of
		$\mathbf{p}$ and $\mathscr{A}$, there exists $\rket{\sigma}_\sys{A}$ such that Eq.~\eqref{eq:superposition} holds.
		\item[(iii)] \emph{Strong:} For every choice of
		$\mathbf{p}$, there exists $\rket{\sigma}_\sys{A}$ such that, for every choice of $\mathscr{A}$, Eq.~\eqref{eq:superposition} holds.
	\end{itemize}
\end{property}
\begin{property}[purication principle]
	Let $\rket{\rho}_\sys{A}$ be a deterministic state. Then there exists a system $\sys{B}$, a pure state $\rket{\Sigma_\rho}_{\sys{AB}}$, and a deterministic effect $\rbra{e_\rho}_{\sys{B}}$, such that:
	\begin{align*}
	\rket{\rho}_\sys{A}=\rbra{e_\rho}_\sys{B}\rket{\Sigma_\rho}_{\sys{AB}}.
	\end{align*}
\end{property}
Both the ultraweak superposition and the purification principles are satisfied by QT---but, clearly, not by CT. Interestingly, a simplicial or classical theory where the states of every system are exhausted by the pure ones---e.g.~this is the case of Deterministic Classical Theory~\cite{d2019information}---trivially satisfies the purification principle, since all states are pure.
\begin{theorem}[no superposition for simplicial theories]\label{thm:no_superposition}
	Let $\Theta$ be an OPT. If $\Theta$ is simplicial, then there is no system in $\Theta$ satisfying the weak formulation of the superposition principle. If $\Theta$ is classical, then there is no system in $\Theta$ satisfying any formulation of the superposition principle.
\end{theorem}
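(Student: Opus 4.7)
The plan is to exploit the rigidity imposed by simpliciality. By Theorem~\ref{thm:causality}, in any simplicial theory the pure states of a system $\rA$ are exactly its $\D A$ non-null extremal states, so the set of pure states of $\rA$ is \emph{finite}. Since the weak formulation of the superposition principle demands that, for a fixed observation $\mathscr{A}$, every probability distribution $\mathbf{p}$ on $d\geq 2$ outcomes be realised by some pure $\sigma$ via $\rbraket{a_i}{\sigma}_\rA=p_i$, a plain cardinality mismatch will rule it out.

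Concretely, I would fix $\rA\in\Sys{\Theta}$ admitting a maximal JPD pure set $\mathcal{D}=\{\rket{\rho_i}_\rA\}_{i=1}^d$ with $d\geq 2$, and a discriminating observation $\mathscr{A}=\{\rbra{a_i}_\rA\}_{i=1}^d$. Since $\sum_i\rbra{a_i}_\rA$ takes value $1$ on each normalised state (it is the coarse-graining of the observation test) and, by Theorem~\ref{thm:causality}, the deterministic effect $\rbra{e}_\rA$ is unique, one has $\sum_i\rbra{a_i}_\rA=\rbra{e}_\rA$; hence $\sum_i\rbraket{a_i}{\sigma}_\rA=1$ for every pure $\sigma$. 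The evaluation map $\phi:\rket\sigma_\rA\mapsto(\rbraket{a_i}{\sigma}_\rA)_{i=1}^d$ therefore sends the (at most $\D A$) pure states into the probability simplex $\Delta_{d-1}$, and its image is finite while $\Delta_{d-1}$ is infinite for $d\geq 2$. Picking any $\mathbf{p}\in\Delta_{d-1}\setminus\mathrm{im}\,\phi$ furnishes a counterexample to the weak formulation for this $(\mathscr{A},\mathbf{p})$; since the strong formulation implies the weak one, the strong formulation fails as well.

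For the classical case, joint perfect discriminability of \emph{all} pure states (property~\ref{prope:jpd}) forces every maximal JPD pure set to exhaust $\Extst{A}\setminus\{\rket\varepsilon_\rA\}$, so $d=\D A$ and every pure $\sigma$ coincides with some $\rket{\rho_k}_\rA$. Then for \emph{any} discriminating $\mathscr{A}$ the defining relations $\rbraket{a_i}{\rho_j}_\rA=\delta_{ij}$ already yield $\rbraket{a_i}{\sigma}_\rA=\delta_{ki}$, so the only distributions $\mathbf{p}$ realisable by any pair $(\mathscr{A},\sigma)$ are the Kronecker deltas. Taking any $\mathbf{p}$ that is not a point mass---for instance the uniform distribution over $\mathcal{D}$---refutes the ultraweak formulation and, \emph{a fortiori}, the weak and strong ones.

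The main subtlety is the passage from weak to ultraweak failure, in which $\mathscr{A}$ is also allowed to vary: the cardinality argument used for the weak case is not by itself enough to close this loophole, because in a general simplicial theory different observations could in principle produce different probability profiles on the pure states lying outside $\mathcal{D}$. The classicality hypothesis is precisely what removes this loophole, by forcing $\mathcal{D}$ to exhaust the pure states and thereby rigidly fixing $\rbraket{a_i}{\sigma}_\rA$ as soon as the discrimination constraints on $\mathscr{A}$ are imposed.
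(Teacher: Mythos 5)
Your proof is correct and follows essentially the same route as the paper's: both rest on the finiteness of the set of pure states of a simplicial system against the continuum of probability distributions on $d\geq 2$ outcomes (for the failure of the weak formulation), and on the fact that classicality forces any maximal jointly perfectly discriminable set to exhaust the pure states, so that the discrimination constraints pin $\rbraket{a_i}{\sigma}_{\sys A}$ down to Kronecker deltas for every admissible observation (for the failure of the ultraweak formulation). The only cosmetic difference is that the paper explicitly parametrizes all discriminating observations via the dual functionals $\rbra{f_l}_{\sys A}$ and lists the finitely many achievable distributions, whereas you replace that computation with a direct cardinality argument.
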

\begin{proof}
Let $\Theta$ be a simplicial theory, and $\sys A$ a system of $\Theta$. First, we notice that, if there does not exist a set $\mathcal{D}=\lbrace \rket{i}_{\sys{A}} \rbrace_{i=1}^d$ of jointly perfectly discriminable pure states with $d\geq 2$, property~\ref{prope:superposition} is empty. Accordingly, we assume that $\mathcal{D}$ exists, that it is maximal, and, as a necessary condition, that $\D{A}\geq 2$.  Let us pose $I\coloneqq\lbrace 1,\ldots,d \rbrace$ and $K\coloneqq\lbrace d+1,\ldots,\D{A} \rbrace$. Let $\lbrace \rbra{f_l}_{\sys{A}} \rbrace_{l\in I\cup K}$ be the set of linear functionals such that $\rbraket{f_l}{j}_\sys{A}=\delta_{lj}$ for all non-null extremal states $\rket{j}_\sys{A}\in \Extst{A}$ and all $l\in I\cup K$. Then, every observation $\T{A}=\lbrace \rbra{a_i} \rbrace_{i\in I}$ such that $\rbraket{a_i}{i'}=\delta_{ii'}$ for all $i,i'\in I$ has the following form:
\begin{align}\label{eq:perf_discr_obs}
\rbra{a_i}_\sys{A} = \rbra{f_i}_{\sys{A}} + \sum_{k\in K} q_k^i \rbra{f_k}_{\sys{A}},
\end{align}
with $q_k^i\geq 0$ and $\sum_{i'\in I} q_k^{i'}=1$ for all $i\in I,k\in K$, since it must be $\sum_{i'\in I}\rbra{a_{i'}}_\sys{A} = \rbra{e}_\sys{A}\equiv\sum_{l\in I\cup K}\rbra{f_{l}}_\sys{A}$. If the theory $\Theta$ is not classical, then $|I|\equiv d< \D{A}$ and $|K|\geq 1$. Being the number of pure states finite, for every choice of $\T{A}=\lbrace \rbra{a_i} \rbrace_{i\in I}$ of the form~\eqref{eq:perf_discr_obs}, Eq.~\eqref{eq:superposition} might be satisfied only for a finite number of choices of probability distributions $\mathbf{p}=\lbrace p_i \rbrace_{i\in I}$, namely those with  $p_i=\delta_{ii_0}$ and $i_0\in I$, or those with $p_i=q^i_k$ for $k\in K$. Therefore no simplicial theory satisfies the weak formulation (ii) of property~\ref{prope:superposition}. If the theory $\Theta$ is classical, then $|I|\equiv d=\D{A}$ and $K=\emptyset$ by definition. This means that an observation $\T{A}$ of the form~\eqref{eq:perf_discr_obs} is unique---in particular, $\T{A}=\lbrace \rbra{f_i} \rbrace_{i\in I}$. Accordingly, Eq.~\eqref{eq:superposition} can be satisfied for a unique (modulo permutations of the indices) choice of probability distribution $\mathbf{p}=\lbrace p_i \rbrace_{i=1}^d$, namely $p_i=\delta_{ii_0}$ for some $i_0\in I$ and all $i\in I\setminus\{i_0\}$. Therefore no classical theory satisfies the ultraweak formulation (i) of property~\ref{prope:superposition}.
%
\end{proof}
\begin{theorem}[no purification for simplicial theories with $n$-local discriminability]\label{thm:no_purification}
	Let $\Theta$ be a simplicial theory satisfying $n$-local discriminability for some $n$. Then, there is no mixed state in $\Theta$ having a purification.
\end{theorem}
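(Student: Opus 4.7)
The plan is to argue by contradiction. Assume $\rket{\rho}_\rA$ is mixed and has a purification, i.e.\ a pure state $\rket{\Sigma_\rho}_{\rA\rB}$ with $\rket{\rho}_\rA=\rbra{e_\rho}_\rB\rket{\Sigma_\rho}_{\rA\rB}$; by causality (Theorem~\ref{thm:causality}) the purifying effect must be the unique deterministic effect $\rbra{e}_\rB$ of $\rB$.

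First, I would apply Lemma~\ref{lem:refinement} to the non-null extremal state $\rket{\Sigma_\rho}_{\rA\rB}$, obtaining the unique product of non-null extremal local states $\rket{i}_\rA\rket{j}_\rB$ that it convexly refines. By Theorem~\ref{thm:classification}, the corresponding decomposition reads
\begin{align*}
\rket{i}_\rA\rket{j}_\rB=\sum_{k\in I_{ij}}p_k^{ij}\rket{k}_{\rA\rB},\qquad p_k^{ij}>0,
\end{align*}
with $\rket{\Sigma_\rho}_{\rA\rB}$ appearing among the $\rket{k}_{\rA\rB}$.

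Next, I would marginalise on $\rB$: since every non-null extremal state of a simplicial theory is deterministic (again Theorem~\ref{thm:causality}), $\rbraket{e}{j}_\rB=1$, and the equation becomes
\begin{align*}
\rket{i}_\rA=\sum_{k\in I_{ij}}p_k^{ij}\,\rbra{e}_\rB\rket{k}_{\rA\rB},
\end{align*}
expressing the extremal state $\rket{i}_\rA$ as a convex combination of the deterministic states $\rbra{e}_\rB\rket{k}_{\rA\rB}\in\St{A}$.

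The crux—which I expect to be the main obstacle—is to show from this that each $\rbra{e}_\rB\rket{k}_{\rA\rB}$ coincides with $\rket{i}_\rA$. My strategy is to decompose each $\rbra{e}_\rB\rket{k}_{\rA\rB}$ into non-null extremal states of $\rA$ (which is possible because it is deterministic and $\Theta$ is simplicial), substitute back into the displayed equation, and invoke the \emph{uniqueness} of the simplicial decomposition of $\rket{i}_\rA$ as itself; the strict positivity of the $p_k^{ij}$ then forces every extremal component orthogonal to $\rket{i}_\rA$ to carry zero weight, and determinism fixes the remaining coefficient to $1$. It then follows that $\rket{\rho}_\rA=\rbra{e}_\rB\rket{\Sigma_\rho}_{\rA\rB}=\rket{i}_\rA$, which is pure—contradicting the mixedness of $\rket{\rho}_\rA$. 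The argument depends essentially on simpliciality (for uniqueness of convex decompositions) and on the positivity of the $p_k^{ij}$ granted by Theorem~\ref{thm:classification}; without either ingredient, the different states $\rbra{e}_\rB\rket{k}_{\rA\rB}$ could fail to collapse onto $\rket{i}_\rA$.
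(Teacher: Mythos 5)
Your proposal is correct and follows essentially the same route as the paper: marginalise the decomposition $\rket{i}_\rA\rket{j}_\rB=\sum_{k\in I_{ij}}p_k^{ij}\rket{(ij)_k}_{\rA\rB}$ with the unique deterministic effect of $\rB$ and use the uniqueness of simplicial convex decompositions (plus strict positivity of the $p_k^{ij}$) to force $\rbra{e}_\rB\rket{(ij)_k}_{\rA\rB}=\rket{i}_\rA$, so every marginal of a pure state is pure. The only cosmetic difference is that you phrase it as a contradiction while the paper states the conclusion directly.
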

\begin{proof}
	By Theorem~\ref{thm:classification}, every pure state of a composite system convexly refines the parallel composition $\rket i_\sys A\rket j_\sys B$ of two extremal states. In particular, denoting the (unique) deterministic effect of $\sys{B}$ by $\rket{e}_{\sys{B}}$, by direct inspection of Eq.~\eqref{eq:classification_states} one concludes that, for any pure state $\rket{(ij)_k}_{\sys{AB}}$, it must be $\rbra{e}_{\sys{B}}\rket{(ij)_k}_{\sys{AB}}=\rket{i}_{\sys{A}}$, since the convex decomposition into non-null extremal states is unique by simpliciality. Accordingly, any marginal state of a pure state is pure, implying that no mixed stated can be purified in $\Theta$.
\end{proof}

\section{Discussion and conclusions}\label{sec:conclusions}
In Propositions~\ref{prop:loc_discr} and~\ref{prop:atom_comp} we showed that, in arbitrary probabilistic theories, the presence of entangled states is a consequence of the failure of local discriminability or, independently, of atomicity of state-composition. Then we specialised to simplicial theories, proving the converse of Propositions~\ref{prop:loc_discr} and~\ref{prop:atom_comp} for this class of theories. In particular, Theorem~\ref{thm:loc_discr_simplicial} asserts that a simplicial theory admits of entangled states if and only if it does not satisfy local discriminability. Theorem~\ref{thm:atomicity} and Theorem~\ref{thm:classification} are proved  under the  hypothesis of $n$-local discriminability for some integer $n$. On the one hand, a simplicial theory with $n$-local discriminability contains entangled states if and only if it does not satisfy atomicity of state-composition. This implies that there exists at least a pair of systems $\sys{A},\sys{B}$ such that at least a product of pure states of $\sys{A}$ and $\sys{B}$ is not pure. On the other hand, Theorem~\ref{thm:classification} provides a simple classification of the parallel composition rule for such theories, as a consequence of the failure of atomicity of composition. Eq.~\eqref{eq:classification_states}, in turn, implies that the marginal state of a pure entangled state in a simplicial theory with $n$-local discriminability is always pure.

As far as classical theories are concerned, on the one hand, the above results entail that the usual notion of classicality does not prevent the presence of entanglement in a physical theory. Indeed, CT is the only classical theory without entanglement.\footnote{More precisely, CT is the only classical theory that is: (i) without entanglement, (ii) convex, and (iii) such that every transformation which is compatible with simpliciality, convexity, and local discriminability is allowed.}
On the other hand, Theorem~\ref{thm:causality} asserts that the notion of classicality given by the simplicial structure of states constrains a theory to be necessarily causal. Interestingly, in the above sense, this means that the notion of causality is inherent to that of classicality, representing a no-go theorem for conceiving \emph{non-causal} classical theories. 

Finally, Theorems~\ref{thm:no_superposition} and~\ref{thm:no_purification} highlight two relevant properties that every simplicial theory shares with CT: superposition is not admitted, and no mixed state has a purification if $n$-local discriminability holds for some $n$. Interestingly, at this stage it cannot be excluded that there might exist non-classical simplicial theories satisfying the ultraweak operational formulation of the superposition principle (see property~\ref{prope:superposition}ì). Besides, we notice that Theorem~\ref{thm:no_purification} has a different content from the no-go theorem proven in Ref.~\cite{winczewski2018no}, although the latter holds for convex discrete theories---which strictly include the simplicial ones. Indeed, the no-go theorem of Ref.~\cite{winczewski2018no} states that, in a convex discrete theory, for every system \emph{only a finite number of (mixed) states may possibly have a purification}. However, our Theorem~\ref{thm:no_purification} generalises the latter in the case of simplicial theories with $n$-local discriminability, stating that \emph{not a single (mixed) state admits of a purification}.

The general results of the present work do not rely on any additional structure---such as, convexity, the no-restriction hypothesis~\cite{bookDCP2017,PhysRevA.87.052131}, or other properties---beyond the simplicial one. If a theory is simplicial, there is no complementarity---and it is thus impossible to violate Bell's Inequalities.\footnote{Indeed, if there is no complementarity, there exists a joint probability distribution for the outcomes of any pair of measurements, which then excludes a violation of any probability bound.} Moreover, in presence of local discriminability, the converse is also true~\cite{PhysRevA.94.042108}. While the non-violation of any probability bound means that \emph{every single correlation} of the theory can be described by a local realistic hidden-variable model, this in principle does not imply that there exists a \emph{coherent and complete ontological model}~\cite{s2019noncontextuality} describing \emph{the theory as a whole} in a local realistic fashion. Then, an interesting open question is: does there exist a convenient ontological model for such theories?

All simplicial theories with entanglement contain states with non-null discord (see Ref.~\cite{PhysRevLett.108.120502}), despite being simplicial. Furthermore, as shown in Ref.~\cite{d2019information}, if a theory satisfies the \emph{full-information without disturbance principle}, then the pure states of every system are jointly perfectly discriminable. It follows that in non-classical simplicial theories it is impossible to extract \emph{all} the information without disturbing the measured system. Moreover, simplicial theories with entanglement feature  \emph{hypersignaling}~\cite{PhysRevLett.119.020401}, since for such theories one has $\D{AB}>\D{A}\D{B}$ for at least a pair of systems $\rA,\rB$. At this stage, it is not possible to compare simplicial theories exhibiting entanglement with results on \emph{broadcasting}~\cite{PhysRevLett.99.240501} or \emph{teleportation}~\cite{barnum2012teleportation} in arbitrary probabilistic theories, since local discriminability was therein assumed.

It is interesting to notice that all classical theories have no \emph{dimension mismatch}~\cite{Brunner_2014}, while in the case of simplicial theories the problem is open.  Interestingly, simplicial theories satisfy both Information Causality (IC)~\cite{Pawowski:2009aa} and the Information Content Principle (ICP)~\cite{PhysRevA.95.022119}. This is a consequence of the fact that, in a simplicial theory, the mixing entropy satisfies the sufficient properties guaranteeing that IC and ICP hold in a general simplicial theory~\cite{Barnum2010,PhysRevA.84.042323,PhysRevA.95.022119}. In particular, IC and ICP cannot single out CT among arbitrary classical theories.

More generally, are there device-independent principles~\cite{scarani2012device,GRINBAUM201722} ruling out simplicial theories with entanglement? An explicit construction of a simplicial theory with entanglement, complete with the set of transformations, would allow one to study the simplicial scenario as far as, for example, communication complexity~\cite{PhysRevLett.96.250401} is concerned. One could also examine the relation between simplicial theories with entanglement and the so-called \emph{epistemically restricted} (or \emph{``epirestricted''}) theories---such as Spekkens toy theory~\cite{PhysRevA.75.032110}---or consider them in the light of the literature on the notion of classicality (e.g.~see Refs.~\cite{Navascues2016,PhysRevLett.119.080503,s2019noncontextuality}). We conclude pointing out that our results highlight the relevance of the notion of compositionality~\cite{doi:10.1080/00107510903257624} in the scope of probabilistic theories. Indeed, our results strongly rely on the general properties of the parallel composition of systems in a theory. It would be interesting to investigate the applicability of the methods exploited to broader contexts, e.g.~extending them to theories where the sets of states are quantum, or even to post-quantum theories.

\section*{Acknowledgments}
This publication was made possible through the support of a grant from the John Templeton Foundation, ID \# 60609 ``Quantum Causal Structures''. The opinions expressed in this publication are those of the authors and do not necessarily reflect the views of the John Templeton Foundation.

\bibliography{opt_bib2}

\appendix\section{Proof of Proposition~\ref{puropuro}}\label{app:A}

We here recall that in the main text we proved that: (i) any simplicial theory is causal (see Theorem~\ref{thm:causality}), and (ii) the non-null extremal states of a simplicial theory are deterministic (see proof of Theorem~\ref{thm:causality}). We will make use of the above results in the present appendix. We also introduce the definition of separable state for an arbitrary number of systems, which will be of crucial relevance in the remainder. Let $\sys S=\sys{S_1}\sys{S_2}\cdots\sys{S_{n}}\in\Sys{\Theta}$, and $\rket\rho\in\St{S}$. We say that $\rket\rho$ is {\em separable} if there exist (finitely many) disjoint non-trivial bipartitions $\sys{S}^{a}_0\coloneqq\{i^{a}_1,i^{a}_2,\ldots,\i^{a}_k\}$, $\sys{S}^{a}_1\coloneqq\{j^{a}_1,j^{a}_2,\ldots,j^{a}_{N-k}\}$ of $\{1,2,\ldots,n\}$ such that
\begin{align*}
\rket\rho=\sum_{a\in\mathsf A}p_a\rket{\sigma}_{\sys{S}^a_0}\rket{\tau}_{\sys S^a_1},
\end{align*}
with $p_a>0$ for all $a\in\mathsf A$.

\begin{lemma}\label{lem:separable}
	Let $\Theta$ be a simplicial OPT satisfying $n$-local discriminability for some positive integer $n$. For all $(n+1)$-partite system $\sys S=\sys{S_1}\sys{S_2}\cdots\sys{S_{n+1}}\in\Sys{\Theta}$, every state $\rket{\rho}\in\St{S_1S_2\cdots S_{n+1}}$ admits of a convex decomposition into states each of which convexly refines some separable state of $\sys{S_1S_2\cdots S_{n+1}}$.
\end{lemma}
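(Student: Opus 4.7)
The plan is to combine $n$-local discriminability with the uniqueness of simplicial decompositions. The key observation is that $n$-local discriminability yields products of sub-states over ``$n$-compatible'' partitions as a linear spanning set of $\StR{S_1\cdots S_{n+1}}$, and every such product is automatically separable, so the simplicial expansion of $\rket{\rho}$ becomes the desired convex decomposition after a short positivity argument.

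First, I would invoke $n$-local discriminability: the conic combinations of parallel compositions $\rbra{a_1}\boxtimes\cdots\boxtimes\rbra{a_l}$, where each $\rbra{a_r}$ acts on at most $n$ of the $\sys{S_i}$'s and the factors jointly cover all $n+1$ parties, are separating for states of $\sys{S_1\cdots S_{n+1}}$. By finite-dimensional duality this forces such effects to span $\EffR{S_1\cdots S_{n+1}}$ linearly, and dually forces the corresponding state products $\rket{\pi}=\rket{\sigma_1}\boxtimes\cdots\boxtimes\rket{\sigma_l}$---with each $\rket{\sigma_r}$ supported on at most $n$ systems and the factors partitioning all $n+1$ parties---to span $\StR{S_1\cdots S_{n+1}}$ linearly. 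Since no factor can carry more than $n$ parties while the total is $n+1$, every such $\rket{\pi}$ has at least two factors and is therefore separable in the sense of the preamble.

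Next, I would exploit simpliciality. Fixing $\rket{\rho}\in\St{S_1\cdots S_{n+1}}$, pick a real expansion $\rket{\rho}=\sum_{j}\alpha_j\rket{\pi_j}$ with each $\rket{\pi_j}$ a separable product state and the $\alpha_j$'s reals of unrestricted sign. By simpliciality of $\sys{S_1\cdots S_{n+1}}$ the non-null extremal states $\{\rket{k_i}\}$ are linearly independent in $\StR{S_1\cdots S_{n+1}}$, and both $\rket{\rho}$ and each $\rket{\pi_j}$ admit unique decompositions $\rket{\rho}=\sum_{i}p_i\rket{k_i}$ and $\rket{\pi_j}=\sum_{i}q_{ji}\rket{k_i}$ with $p_i\geq 0$ and $q_{ji}\geq 0$. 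Matching coefficients via linear independence of the $\rket{k_i}$'s yields $p_i=\sum_j\alpha_j q_{ji}$ for every $i$.

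Finally, I would close the argument with a sign analysis. Fix any $i$ with $p_i>0$: strict positivity of $\sum_j\alpha_j q_{ji}$ combined with $q_{ji}\geq 0$ for all $j$ forces the existence of some index $j(i)$ with $\alpha_{j(i)}>0$ and $q_{j(i),i}>0$, for otherwise every nonzero summand would be non-positive, contradicting $p_i>0$. Hence $\rket{k_i}$ appears with strictly positive coefficient in the simplicial decomposition of the separable state $\rket{\pi_{j(i)}}$, so $\rket{k_i}$ convexly refines $\rket{\pi_{j(i)}}$. Therefore $\rket{\rho}=\sum_i p_i\rket{k_i}$ realizes $\rket{\rho}$ as a convex combination of states, each of which convexly refines a separable state, as required. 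The principal subtlety I anticipate lies in the first step, specifically the clean translation of $n$-local discriminability (a statement about separating effects) into the dual linear spanning statement for product states; once this duality is in hand, the remainder is routine linear-algebraic bookkeeping powered by the uniqueness of simplicial decompositions.
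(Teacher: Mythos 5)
Your proof is correct and follows essentially the same route as the paper's: both arguments rest on the observation that $n$-local discriminability forces the separable states to span $\StR{S_1\cdots S_{n+1}}$, and then transfer this to the linearly independent, uniquely decomposing non-null extremal states of the simplex. The only divergence is the final bookkeeping---the paper counts dimensions to conclude that \emph{every} non-null extremal state convexly refines some separable state, whereas you run a sign analysis on the coefficients of a real expansion of $\rket{\rho}$; both work, and both leave the key duality step (separating $n$-local effects implies spanning separable states) at the same level of detail.
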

\begin{proof}
	Take the subset $\mathcal{E}\subseteq\Extst{S_1\cdots S_{n+1}}$ of all non-null extremal states of $\St{S_1\cdots S_{n+1}}$ which convexly refine some separable state. Since $\Theta$ satisfies $n$-local discriminability, this is a spanning set for the space $\StR{S_1\cdots S_{n+1}}$. Moreover, since $\Theta$ is simplicial, the elements of $\mathcal{E}$ are linearly independent. As a consequence, the dimension of $\StR{S_1\cdots S_{n+1}}$ amounts to the cardinality of $\mathcal{E}$, which is then, by simpliciality, a complete set of states convexly generating every state $\rket{\rho}\in \St{S_1\cdots S_{n+1}}$. Equivalently, $\Extst{S_1\cdots S_{n+1}}=\{\rket{\varepsilon}\}\cup\mathcal{E}$.
\end{proof}
\begin{lemma}\label{lem:marginal}
	Let $\Theta$ be a simplicial OPT. Let $\rket{\pi}\in \St{S_1\cdots S_n}$ with $n\geq 2$, so that $\rket{\pi}= \rket{\pi_I}\rket{\pi_J}$, for some states $\rket{\pi_I}\in\sys{S_{I}},\rket{\pi_J}\in\sys{S_{J}}$ with $I\cup J=\{1,\ldots ,n\}$, $I,J\neq\emptyset$, $I\cap J=\emptyset$, and $\sys{S_K}=\sys{S_{k_1}S_{k_2}\cdots S_{k_l}}$ for every $l$-tuple $K\subseteq\{1,\ldots ,n\}$.		
	Let $\rket{\phi}\in \Extst{S_1\cdots S_n}$ be a non-null extremal state that convexly refines $\rket{\pi}$. Finally, given $i\in I$ and $j\in J$, let $\rbra{e_{K_{ij}}}$ denote the deterministic effect on $\sys{S_{K_{ij}}}$ with $K_{ij}=\{1,\ldots,n\}\setminus\{i,j\}$. Then $\rbraket{{e}_{K_{ij}}}{\pi}$ is a product state, and $\rbraket{{e}_{K_{ij}}}{\phi}\in\St{S_iS_j}$ is a physical state that convexly refines $\rbraket{{e}_{K_{ij}}}{\pi}$.
\end{lemma}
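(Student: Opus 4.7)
The plan is first to check that $\rbraket{e_{K_{ij}}}{\pi}$ factorises into a product state on $\sys{S_iS_j}$, and then to transport the convex refinement through the local deterministic effect. The three inputs I would use are: causality (Theorem~\ref{thm:causality}, ensuring that every system has a unique deterministic effect), the compatibility of sequential and parallel composition (Eq.~\eqref{eq:parallel_commuting}), and the general fact that the parallel composition of two deterministic effects is itself deterministic.

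For the product structure of the marginal, I would split $K_{ij}$ into its two pieces $K_{ij} = (I\setminus\{i\}) \cup (J\setminus\{j\})$, which is a disjoint union. The parallel composition $\rbra{e_{I\setminus\{i\}}}\boxtimes\rbra{e_{J\setminus\{j\}}}$ is a deterministic effect on $\sys{S_{K_{ij}}}$, and by causality it must coincide with $\rbra{e_{K_{ij}}}$, adopting the harmless convention that the deterministic effect on an empty subsystem is the trivial scalar $1$ (this also covers the degenerate cases $|I|=1$ or $|J|=1$, including $n=2$). Applying this identity to $\rket{\pi}=\rket{\pi_I}\rket{\pi_J}$ and using~\eqref{eq:parallel_commuting}, one obtains $\rbraket{e_{K_{ij}}}{\pi}=\rbraket{e_{I\setminus\{i\}}}{\pi_I}\,\rbraket{e_{J\setminus\{j\}}}{\pi_J}$, which is manifestly a product state on $\sys{S_iS_j}$.

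For the refinement claim, I would use the refinement hypothesis in the form $\rket{\pi}=p\rket{\phi}+\rket{R}$, with $p>0$ and $\rket{R}$ a (possibly subnormalised) state of $\sys{S_1\cdots S_n}$. Applying $\rbra{e_{K_{ij}}}$ in parallel with the identity on $\sys{S_iS_j}$ is linear and maps physical states to physical states, so $\rbraket{e_{K_{ij}}}{\pi}=p\,\rbraket{e_{K_{ij}}}{\phi}+\rbraket{e_{K_{ij}}}{R}$ is a convex decomposition of the product state $\rbraket{e_{K_{ij}}}{\pi}$ inside $\St{S_iS_j}$ in which $\rbraket{e_{K_{ij}}}{\phi}$ carries positive weight $p$. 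This exhibits $\rbraket{e_{K_{ij}}}{\phi}$ as a convex refinement of $\rbraket{e_{K_{ij}}}{\pi}$, as desired.

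I do not anticipate a serious obstacle. The only care required is to handle cleanly the edge cases where $I\setminus\{i\}$ or $J\setminus\{j\}$ is empty, which is absorbed by the trivial-system convention for the deterministic effect. Simpliciality enters only indirectly, through Theorem~\ref{thm:causality}, which is why the argument does not require any additional assumption such as $n$-local discriminability.
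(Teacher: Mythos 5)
Your proof is correct and follows essentially the same route as the paper's: causality is used to factorise $\rbra{e_{K_{ij}}}$ as $\rbra{e_{I\setminus\{i\}}}\boxtimes\rbra{e_{J\setminus\{j\}}}$, giving the product structure of the marginal, and the deterministic effect is then applied linearly to the convex decomposition witnessing the refinement, transporting it to $\St{S_iS_j}$. The only (minor) point the paper adds is that $\rbraket{e_{K_{ij}}}{\phi}$ is not merely physical but \emph{non-null}, since $\rket{\phi}$ is deterministic (non-null extremal states of a simplicial theory are deterministic by Theorem~\ref{thm:causality}) and hence so is its marginal --- worth stating explicitly so that the refinement is by a non-null state.
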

\begin{proof}
	The case $n=2$ is trivially true, and we will then assume $n\geq 3$ in the following. By hypothesis, we can pose $\rket{\pi}=p\rket{\phi}+(1-p)\rket{\sigma}$, where $\rket{\sigma}$ is a deterministic state and $p\in(0,1]$. By construction we have that 
	\begin{align}\label{eq:marginal}
	\rbraket{{e}_{K_{ij}}}{\pi}= p\rbraket{{e}_{K_{ij}}}{\phi}+(1-p)\rbraket{{e}_{K_{ij}}}{\sigma}\in\St{S_iS_j},
	\end{align}
	and clearly both $\rbraket{{e}_{K_{ij}}}{\phi}$ and $\rbraket{{e}_{K_{ij}}}{\sigma}$ are deterministic states of $\sys{S_iS_j}$. Moreover, by causality, $\rbra{e_{K_{ij}}}=\rbra{e_{I\setminus\{i\}}}\rbra{e_{J\setminus\{j\}}}$, then $\rbraket{{e}_{K_{ij}}}{\pi}$ is a product state of $\sys{S_iS_j}$. Since the convex decomposition into non-null extremal states is unique by simpliciality, from Eq.~\eqref{eq:marginal} we conclude that $\rbraket{{e}_{K_{ij}}}{\phi}$---that is non-null, although it may possibly be non-extremal---convexly refines $\rbraket{{e}_{K_{ij}}}{\pi}$.
\end{proof}
\addtocounter{proposition}{-1}
\begin{proposition}
Let $\Theta$ be a simplicial OPT satisfying $n$-local discriminability for some positive integer $n$. For every pair of systems $\rA,\rB\in\Sys{\Theta}$, every non-null extremal state $\rket{\lambda}_{\rA\rB}\in\Extst{AB}$ convexly refines the parallel composition of some pair of non-null extremal states $\rket{\rho}_{\sys{A}}\rket{\sigma}_{\sys{B}}\in\St{AB}$.
\end{proposition}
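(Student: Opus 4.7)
My plan is to embed $\rA\rB$ into an $(n+1)$-partite system and to exploit Lemmas~\ref{lem:separable} and~\ref{lem:marginal} jointly. Concretely, I would adjoin $n-1$ auxiliary systems $\rC_1,\ldots,\rC_{n-1}\in\Sys{\Theta}$, each equipped with a non-null extremal state $\rket{\mu_k}_{\rC_k}$ (deterministic by Theorem~\ref{thm:causality}), and form the product state $\rket{\Lambda}\coloneqq\rket{\lambda}_{\rA\rB}\rket{\mu_1}_{\rC_1}\cdots\rket{\mu_{n-1}}_{\rC_{n-1}}$ of the $(n+1)$-partite composite $\sys{X}\coloneqq\rA\rB\rC_1\cdots\rC_{n-1}$. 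By Lemma~\ref{lem:separable} applied to $\sys{X}$, I obtain a simplicial decomposition $\rket{\Lambda}=\sum_a p_a\rket{\phi_a}$ into non-null extremals $\rket{\phi_a}\in\Extst{X}$ that each refine some separable state; since an extremal refining a convex combination must refine at least one summand, every $\rket{\phi_a}$ convexly refines an individual product $\rket{\sigma_a}_{\sys{S}^a_0}\rket{\tau_a}_{\sys{S}^a_1}$ across some non-trivial bipartition $(\sys{S}^a_0,\sys{S}^a_1)$ of $\{\rA,\rB,\rC_1,\ldots,\rC_{n-1}\}$.

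Next I would apply the (unique, by causality) deterministic effect $\rbra{e}_{\rC_1\cdots\rC_{n-1}}$ to both sides of the decomposition, obtaining $\rket{\lambda}_{\rA\rB}=\sum_a p_a\rbraket{e}{\phi_a}_{\rC_1\cdots\rC_{n-1}}$, a convex combination of deterministic states on $\rA\rB$; extremality of $\rket{\lambda}$ then forces $\rbraket{e}{\phi_a}_{\rC_1\cdots\rC_{n-1}}=\rket{\lambda}_{\rA\rB}$ for every index $a$. For any $a$ whose bipartition $(\sys{S}^a_0,\sys{S}^a_1)$ separates $\rA$ from $\rB$---say $\rA\in\sys{S}^a_0$ and $\rB\in\sys{S}^a_1$---Lemma~\ref{lem:marginal} applied with $i=\rA,j=\rB$ shows that the marginal $\rket{\lambda}$ convexly refines the product $\rket{\tilde\alpha_a}_\rA\rket{\tilde\beta_a}_\rB$ on $\rA\rB$ obtained by marginalizing $\rket{\sigma_a}\rket{\tau_a}$ over all the auxiliary systems, with $\rket{\tilde\alpha_a}$ and $\rket{\tilde\beta_a}$ deterministic. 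Decomposing $\rket{\tilde\alpha_a}$ and $\rket{\tilde\beta_a}$ into their unique simplicial extremal components and using transitivity of convex refinement, $\rket{\lambda}$ must convexly refine some parallel composition $\rket{\rho}_\rA\rket{\sigma}_\rB$ of non-null extremal states of $\rA$ and $\rB$, as required.

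The main obstacle I foresee is establishing the existence of at least one index $a$ whose bipartition separates $\rA$ from $\rB$. If every $(\sys{S}^a_0,\sys{S}^a_1)$ placed $\rA$ and $\rB$ on the same side, the application of Lemma~\ref{lem:marginal} would only control marginals of the form $\St{AC_k}$ or $\St{BC_k}$, and one can check, by iterating the extremality argument of the previous paragraph against each $\rket{\mu_k}$, that the chain collapses to the tautology that each $\rket{\phi_a}$ refines $\rket{\Lambda}$ itself and yields no product-refinement on $\rA\rB$. To rule out this degenerate scenario I expect to sharpen the span argument underlying Lemma~\ref{lem:separable}: using $n$-local discriminability together with the product form of $\rket{\Lambda}$, show that the non-null extremals of $\sys{X}$ which refine products across bipartitions separating $\rA$ from $\rB$ span a subspace of $\StR{X}$ containing $\rket{\Lambda}$, so that the unique simplicial decomposition of $\rket{\Lambda}$ must include at least one such extremal. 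A secondary route would be to vary the choice of auxiliary systems---for instance by adjoining additional copies of $\rA$ or $\rB$---to combinatorially force a separating bipartition in at least one term of the decomposition, and then combine the resulting product refinements.
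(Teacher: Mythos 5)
Your overall strategy---embed $\rA\rB$ into an $(n+1)$-partite system, invoke Lemma~\ref{lem:separable} to decompose into refiners of separable states, marginalize back down via Lemma~\ref{lem:marginal}, and use extremality of $\rket{\lambda}$ to force a product refinement---is exactly the skeleton of the paper's proof. However, the obstacle you flag at the end is fatal for the specific embedding you chose, and it is not resolved by either of the repairs you sketch. With $\rket{\Lambda}=\rket{\lambda}_{\rA\rB}\rket{\mu_1}_{\rC_1}\cdots\rket{\mu_{n-1}}_{\rC_{n-1}}$, the state you feed to Lemma~\ref{lem:separable} is itself manifestly separable across the bipartition $(\rA\rB\rC_1\cdots\rC_{n-2}\,|\,\rC_{n-1})$, so the lemma is satisfied vacuously: by simpliciality, the unique extremal decomposition of $\rket{\Lambda}$ consists of states refining \emph{that} separable state, and nothing forces any term to refine a separable state whose bipartition splits $\rA$ from $\rB$. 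The ``degenerate scenario'' you describe is therefore the generic outcome of your construction, not an edge case. Your first proposed repair (a sharpened span argument) cannot work as stated: $n$-local discriminability only guarantees that refiners of separable states across \emph{arbitrary} bipartitions span the state space; it gives no control over which bipartitions occur in the decomposition of a particular state, and for a product state such as your $\rket{\Lambda}$ the non-separating bipartitions already account for everything.

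The paper closes the gap with a version of your ``secondary route,'' and the combinatorial choice there is the whole point. Instead of inert ancillas, one takes $n$ copies of $\rket{\lambda}$ itself, placed on matched pairs $\sys{A}_m\sys{B}_m$, and---crucially---regards the composite as the $(n+1)$-partite system whose parts are the single block $(\sys{A}_1\cdots\sys{A}_n)$ together with $\sys{B}_1,\ldots,\sys{B}_n$. Any non-trivial bipartition of these $n+1$ parts places the $\sys{A}$-block on one side and at least one $\sys{B}_j$ on the other, so it necessarily separates $\sys{A}_j$ from $\sys{B}_j$ for some $j$; the marginal of the corresponding separable state on $\sys{A}_j\sys{B}_j$ is then a product state, while the marginal of the global state on $\sys{A}_j\sys{B}_j$ is $\rket{\lambda}$ (the remaining factors being deterministic). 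From there the extremality argument you wrote down goes through. So the missing ingredient is not a stronger spanning lemma but an embedding for which \emph{every} admissible bipartition is automatically a separating one.
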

\begin{proof}

	By contradiction, let us suppose that there exist a pair of systems $\rA,\rB\in\Sys{\Theta}$ and an extremal  state $\rket{\lambda}_{\rA\rB}\in\Extst{AB}$ that does not convexly refine any  state of the form $\rket{\rho}_{\sys{A}}\rket{\sigma}_{\sys{B}}\in\St{AB}$ with $\rket\rho_\sys A$ and $\rket\sigma_\sys B$ non-null and extremal. Let us denote
	\begin{align}\label{eq:lambda}
	\rket{\lambda}_{\sys{AB}}\coloneqq \tikzfig{lambda_single}\ .
	\end{align}
	The theory $\Theta$ satisfies $n$-local discriminability. The case $n=1$ is trivial, since all states are separable (see Theorem~\ref{thm:loc_discr_simplicial} in the main text) and $\rket{\lambda}_{\sys{AB}}$ must be vanishing. In the following we will then assume $n\geq 2$.
	Let us now define the following $(n+1)$-partite state:
	\begin{align}\label{eq:Psi_entangled}
	\rket{\Psi} \coloneqq \tikzfig{entangled}\ \in \St{\left(A_1\cdots A_n\right)B_1\cdots B_n},
	\end{align}
	where the systems $\sys{A}_m$ and $\sys{B}_{m'}$ are copies of, respectively, $\sys{A}$ and $\sys{B}$ for every $m,m'\in\{1,\ldots,n\}$. By Lemma~\ref{lem:separable}, the state $\rket{\Psi}$ must be in the convex hull of some states refining the separable states $\rket\Lambda$ of $\sys{\left(A_1\cdots A_n\right)B_1\cdots B_n}$. The latter must be of one of the following two types 
	\begin{align}\label{eq:separable}
	\rket{\Lambda}=\tikzfig{convex_combination}\quad,\ \tikzfig{convex_combination2}\quad ,
	\end{align}
	where in the first case $\sys{\left(A_1\cdots A_n\right)}$ is factorised from $\sys{B_1\cdots B_n}$, while in the second case there must exists a state of some proper subsystem of $\sys{B_1\cdots B_n}$ that is factorised.
	Then there exist coefficients $\alpha_i\in[0,1]$ such that:
	\begin{align}\label{eq:Psi_convex}
	\rket{\Psi}=\sum_{i}\alpha_i\rket{\phi_i},
	\end{align}
	where the $\rket{\phi_i}$ are non-null extremal states in the convex refinement of some separable state $\rket\Lambda$ of one of the two types in Eq.~\eqref{eq:separable}.
	By construction, in both cases 
	we can always find at least a subsystem $\sys{S}=\sys{A}_j\sys{B}_j$ of $\sys{A_1\cdots A_nB_1\cdots B_n}$ (now considered as a $2n$-partite system) such that the marginal state $\rbraket{e_\sys{\bar S}}{\Lambda}$  (where $\sys{\bar S}$ is the complementary subsystem of $\sys S$ in $\sys{ A_1\cdots A_n B_1\cdots B_n}$) is a product state of $\sys{A}_j\sys{B}_j$. On the other hand
	%
	%
	%
	the marginal state of $\rket{\Psi}$ on $\sys S$ is $\rbraket{e_\sys{\bar S}}{\Psi}=\rket{\lambda}_{\sys{A}_j\sys B_j}$. For each term on the r.h.s.~in Eq.~\eqref{eq:Psi_convex}, one can then find the above mentioned subsystem $\sys{S}$ and apply $\rbra{e_\sys{\bar S}}$ to both sides. This gives an equation of the form:
	\begin{align}\label{eq:alpha}
	\rket{\lambda}_{\sys{A}_j\sys B_j}= \alpha_i 
	\rket{\chi_i}_{\rA_j\rB_j}+ \rket{\omega_i}_{\sys{A}_j\sys B_j},
	\end{align}
	where $\rket{\chi_i}_{\rA_j\rB_j}$ is in the convex refinement of some product state by Lemma~\ref{lem:marginal}, and $\rket{\omega_i}_{\sys{A}_j\sys{B}_j}$ is a physical state. Since $\rket{\lambda}_{\sys{A}_j\sys{B}_j}$ is an extremal point of a simplex, its convex decomposition must be trivial, and then either $\alpha_i=0$ or
	\begin{align}
	\rket{\lambda}_{\sys{A}_j\sys B_j}=\rket{\chi_i}_{\rA_j\rB_j}.
	\label{eq:idass}
	\end{align}
	Finally, either $\alpha_i=0$ holds for every $i$, and then by direct inspection of the definitions~\eqref{eq:lambda} and~\eqref{eq:Psi_entangled} $\rket{\lambda}_\sys{AB}$ is vanishing, or 
	identity~\eqref{eq:idass} holds for some $i$, which is a contradiction.
\end{proof}

\end{document}